\title{Complexity of Motion Planning of Arbitrarily Many Robots: \\
  Gadgets, Petri Nets, and Counter Machines}
\author{%
  Joshua Ani%
    \thanks{MIT Computer Science and Artificial Intelligence Laboratory,
      32 Vassar St., Cambridge, MA 02139, USA,
      \protect\url{{joshuaa,mcoulomb,edemaine,diomidov,tagomez7,dylanhen,jaysonl}@mit.edu}}
\and
  Michael Coulombe\footnotemark[1]
\and
  Erik D. Demaine\footnotemark[1]
\and
  Yevhenii Diomidov\footnotemark[1]
\and
  Timothy Gomez\footnotemark[1]
\and
  Dylan Hendrickson\footnotemark[1]
\and
  Jayson Lynch\footnotemark[1]
}
\date{}
\newif\ifabstract
\newif\iffull
\makeatletter \hypersetup{pdftitle={\@title}}}
 \gdef\xxxmark{%
   \expandafter\ifx\csname @mpargs\endcsname\relax % in minipage?
     \expandafter\ifx\csname @captype\endcsname\relax % in figure/caption?
       \marginpar{xxx}% not in a caption or minipage, can use marginpar
     \else
       xxx % notice trailing space
     \fi
   \else
     xxx % notice trailing space
   \fi}
 \gdef\xxx{\@ifnextchar[\xxx@lab\xxx@nolab}
 \long\gdef\xxx@lab[#1]#2{\textbf{[\xxxmark #2 ---{\sc #1}]}}
 \long\gdef\xxx@nolab#1{\textbf{[\xxxmark #1]}}
\gdef\fps@figure{!htbp}}
\let\realbfseries=\bfseries
\def\bfseries{\realbfseries\boldmath}
\newtheorem{theorem}{Theorem}[section]
\newtheorem{lemma}[theorem]{Lemma}
\newtheorem{definition}[theorem]{Definition}
\let\epsilon=\varepsilon
\def\defn#1{\textbf{\textit{\boldmath #1}}}
\begin{document}

\maketitle

\begin{abstract}
  We extend the motion-planning-through-gadgets framework to several new
  scenarios involving various numbers of robots/agents,
  and analyze the complexity of the resulting motion-planning problems.
  While past work considers just one robot or one robot per player,
  most of our models allow for one or more locations to \emph{spawn} new robots
  in each time step, leading to arbitrarily many robots.
  In the 0-player context, where all motion is deterministically forced,
  we prove that deciding whether any robot ever reaches a specified location
  is undecidable, by representing a counter machine.
  In the 1-player context, where the player can choose how to move the robots,
  we prove equivalence to Petri nets, EXPSPACE-completeness for
  reaching a specified location, PSPACE-completeness for reconfiguration, and
  ACKERMANN-completeness for reconfiguration when robots can be destroyed in
  addition to spawned.
  Finally, we consider a variation on the standard 2-player context where,
  instead of one robot per player, we have one robot shared by the players,
  along with a ko rule to prevent immediately undoing the previous move.
  We prove this impartial 2-player game EXPTIME-complete.
\end{abstract}

%%%%%%%%%%%%%%%%%%%%%%%%%%%%%%%%%%%%%%%%%%%%%%%%%%%%%%%%%%%%%%%%%%%%%%%%

\section{Introduction}

Intuitively, motion planning is harder with more agents/robots.
This paper formalizes this intuition by studying the effects of
varying the number of robots in a recent combinatorial model for
combinatorial motion planning and the resulting computational complexity.

Specifically, the \defn{motion-planning-through-gadgets framework}
was introduced in 2018 \cite{gadgets} and has had significant study since
\cite{gadgets2,doors,ani2022trains,ani2022traversability,demaine2022pspace,ani2022checked,lynch2020framework,hendrickson2021gadgets}.
In the original one-player setting, the framework considers a single agent/robot
traversing a dynamic network of ``gadgets'', where each gadget has finite state
and a finite set of traversals that the robot can make depending on the state,
and each traversal potentially changes the state
(and thus which future traversals are possible).
The goal is for the robot to traverse from one specified location to another
(\defn{reachability}), or for the system of gadgets to reach a desired state
(\defn{reconfiguration}) \cite{ani2022traversability}.
Existing results characterize in many settings which gadgets
(in many cases, one extremely simple gadget)
result in NP-complete or PSPACE-complete motion-planning problems,
and which gadgets are simple enough to admit polynomial-time motion planning.
This framework has already proved useful for analyzing the computational
complexity of motion-planning problems involving modular robots
\cite{a2021characterizing}, swarm robots
\cite{balanza2019full,caballero2020relocating},
and chemical reaction networks \cite{alaniz2022reachability}.
These applications all involve naturally multi-agent systems,
so it is natural to consider how the complexity of the gadgets framework
changes with more than one robot.

\paragraph{1-player with arbitrarily many robots.}
In Section~\ref{sec:1playerInf},
we consider a generalization of this 1-player gadget model
to an arbitrary number of robots,
and the player can move any one robot at a time.
By itself, this extension does not lead to additional computational complexity:
such motion planning remains in PSPACE, or in NP if each gadget can be
traversed a limited number of times.
To see the true effect of an arbitrary number of robots, we add one or two
additional features: a \defn{spawner} gadget that can create new robots,
and optionally a \defn{destroyer} gadget that can remove robots.
For reachability, only the spawning ability matters --- it is equivalent to
having one ``source'' location with infinitely many robots ---
and we show that the complexity of motion planning grows to
EXPTIME-complete with a simple single gadget
called the \defn{symmetric self-closing door}
(previously shown PSPACE-complete without spawners~\cite{doors}).
For reconfiguration, we show that motion planning with a spawner and
symmetric self-closing door is just PSPACE-complete
(just like without a spawner), but when we add a destroyer,
the complexity jumps to ACKERMANN-complete
(in particular, the running time is not elementary).
These results follow from a general equivalence to \defn{Petri nets} ---
a much older and well-studied model of dynamic systems ---
whose complexity has very recently been characterized
\cite{leroux2022reachability, czerwinski2022reachability}. 

\paragraph{0-player with arbitrarily many robots.}
In Section~\ref{sec:spawners}, we consider the same concepts in a
0-player setting, where every robot has a forced traversal during its turn,
and spawners and robots take turns in a round-robin schedule.
0-player motion planning in the gadget framework with one robot
was considered previously \cite{ani2022trains,demaine2022pspace},
with the complexity naturally maxing out at PSPACE-completeness.
With spawners and a handful of simple gadgets, we prove that the
computational complexity of motion planning
increases all the way to RE-completeness.
In particular, the reachability problem becomes undecidable.
This is a surprising contrast to the 1-player setting described above,
where the problem is decidable.

\paragraph{Impartial 2-player with a shared robot.}
In Section~\ref{sec:impartial}, we consider changing the number of robots
in the downward direction.
Past study of 2-player motion planning in the gadget framework \cite{gadgets2}
considers one robot per player, with each player controlling their own robot.
What happens if there is instead only one robot, shared by the two players?
This variant results in an \defn{impartial} game where the possible moves
in a given state are the same no matter which player moves next.
To prevent one player from always undoing the other player's moves,
we introduce a \defn{ko rule}, which makes it illegal to perform two
consecutive transitions in the same gadget.
In this model, we show that 2-player motion planning is
EXPTIME-complete for a broad family of gadgets called
``reversible deterministic interacting $k$-tunnel gadget'',
matching a previous result for 2-player motion planning
with one robot per player \cite{gadgets2}.
In other words, reducing the number of robots in this way
does not affect the complexity of the problem
(at least for the gadgets understood so far).

\section{Standard Gadget Model}

We now define the gadget model of motion planning, introduced in \cite{gadgets}.

In general, a \defn{gadget} consists of a finite number of
\defn{locations} (entrances/exits) and a finite number of \defn{states}.
Each state $S$ of the gadget defines a labeled directed graph on the locations,
where a directed edge $(a,b)$ with label $S^\prime$ means that a robot
can enter the gadget at location $a$ and exit at location $b$, changing the state of the gadget from $S$ to $S^\prime$.
Equivalently, a gadget is specified by its \defn{transition graph},
a directed graph whose vertices are state/location pairs,
where a directed edge from $(S,a)$ to $(S',b)$ represents that the robot
can traverse the gadget from $a$ to $b$ if it is in state~$S$,
and that such traversal will change the gadget's state to~$S^\prime$.
Gadgets are \defn{local} in the sense that traversing a gadget does
not change the state of any other gadgets.

A \defn{system of gadgets} consists of gadgets, their initial states, and
a \defn{connection graph} on the gadgets' locations.
If two locations $a$ and $b$ of two gadgets (possibly the same gadget) are connected
by a path in the connection graph, then a robot can traverse freely between
$a$ and~$b$ (outside the gadgets).
(Equivalently, we can think of locations $a$ and $b$ as being identified,
effectively contracting connected components of the connection graph.)
These are all the ways that the robot can move: exterior to gadgets using
the connection graph, and traversing gadgets according to their current states.

Previous work has focused on the robot reachability\footnote{In \cite{gadgets,gadgets2}, ``reachability'' refers to whether an agent/robot can reach a target location. Here we refer to it as \emph{robot reachability} since for models such as Petri-nets the Reachability problem refers to whether a full configuration is reachable.} problem \cite{gadgets,gadgets2}:

\begin{definition}
	For a gadget $G$, \defn{robot reachability for $G$} is the following decision problem. Given a system of gadgets consisting of copies of $G$, the starting location(s), and a win location, is there a path a robot can take from the starting location to the win location?
\end{definition}

Gadget reconfiguration, which had target states for the gadgets to be in, was considered in \cite{ani2022traversability} and \cite{hendrickson2021gadgets}. We additionally investigate a problem where we have target states and multiple locations which require specific numbers of robots.
	
\begin{definition}
	For a gadget $G$, the \defn{multi-robot targeted reconfiguration problem for $G$} is the following decision problem. Given a system of gadgets consisting of copies of $G$, the starting location(s), and
	a target configuration of gadgets and robots, 
	is there a sequence of moves the robots can take to reach the target configuration? 
\end{definition}

\cite{gadgets2} also defines 2-player and team analogues of this problem. In this case, each player has their own starting and win locations, and the players take turns making a single transition across a gadget (and any movement in the connection graph). The winner is the player who reaches their win location first. The decision problem is whether a particular player or team can force a win. When there are multiple robots, we are asking whether any of them can reach the win location.

We will consider several specific classes of gadgets.

\begin{definition}
	A \defn{$k$-tunnel} gadget has $2k$ locations, which are partitioned into $k$ pairs called \defn{tunnels}, such that every transition is between two locations in the same tunnel.
\end{definition}

Most of the gadgets we consider are $k$-tunnel.

\begin{definition}
	The \defn{state-transition graph} of a gadget is the directed graph which has a vertex for each state, and an edge $S\to S^\prime$ for each transition from state $S$ to $S^\prime$. A \defn{DAG} gadget is a gadget whose state-transition graph is acyclic.
\end{definition}

DAG gadgets naturally lead to bounded problems, since they can be traversed a bounded number of times. The complexity of the reachability problem for DAG $k$-tunnel gadgets, as well as the 2-player and team games, is characterized in \cite{gadgets2}.

\begin{definition}
	A gadget is \defn{deterministic} if every traversal can put it in only one state and every location has at most
	1 traversal from it.
	More precisely, its transition graph has maximum out-degree 1.
\end{definition}

\begin{definition}
	A gadget is \defn{reversible} if every transition can be reversed. More precisely, its transition graph is undirected.
\end{definition}

Reversible deterministic gadgets are gadgets whose transition graphs are partial matchings, and they naturally lead to unbounded problems. \cite{gadgets2} characterizes the complexity of reachability for reversible deterministic $k$-tunnel gadgets and partially characterizes the complexity of the 2-player and team games.

We define the decision problems we consider in their corresponding sections.
%We sometimes refer to an agent as a `robot,' since in some of the problems we consider it doesn't have agency.

%%%%%%%%%%%%%%%%%%%%%%%%%%%%%%%%%%%%%%%%%%%%%%%%%%%%%%%%%%%%%%%%%%%%%%%%

\section{0-Player Motion Planning with Spawners}
\label{sec:spawners}

In this section, we describe a model of 0-player motion planning,
introduce the spawner gadget, and
show that 0-player motion planning with spawners is RE-complete, implying undecidability. RE-completeness is defined in terms of arbitrary computable many-one reductions; in particular, they don't have to run in polynomial time. We will use the fact that the halting problem for 3-counter machines is RE-complete \cite{counter}.

\subsection{Model}

In \defn{0-player directed-edge motion planning} (with one robot),
we modify 1-player motion planning by
removing the player's ability to control the robot, and
specifying directions on the connections between gadget locations.
More precisely, the connection graph is now a directed graph
such that each gadget location has only incoming edges
(meaning that the robot enters the gadget from that location),
or only outgoing edges and at most one such edge
(meaning that the robot exits the gadget from that location);
and all gadgets must be deterministic.\footnote{There was no need to apply directions to the connection graph in \cite{ani2022trains} because each location acted exclusively as either the start of transitions or the end of transitions. In \cite{demaine2022pspace} the connections were undirected and it was assumed the robot proceeded away from the gadget where it just traversed.}
Thus the robot moves on its own,
moving in the direction of the edge it is on
and traversing any gadgets it encounters.
The reachability question asks whether the robot
reaches a specified target location in finite time.

Because the state of this system can be encoded in a polynomial number of
bits (the state for each gadget and the location of the robot),
this reachability problem is in PSPACE as in other 0-player models of
the gadget framework \cite{ani2022trains,demaine2022pspace}.

Our extension is to define the \defn{spawner} gadget:
a 1-location gadget that spawns a new robot in each round,
appearing at its only location.
We now define 0-player directed-edge motion planning to take into
account multiple robots and spawners.
\defn{0-player directed-edge motion planning with spawners}
is divided into rounds.
In each round, each robot takes a turn in spawn order,
and then each spawner spawns a robot (in a predefined spawning order).
A robot's turn consists of it moving along the directed edge it is on
until it either traverses a gadget or it gets stuck
(i.e., reaches a point where all edges are directed to its position).
The reachability question asks whether any robot
reaches a specified target location in finite time.

\begin{lemma}
	Deciding robot reachability in 0-player directed-edge motion planning with spawners with any set of gadgets is in RE.
\end{lemma}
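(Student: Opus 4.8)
The plan is to exhibit a semi-decision procedure, which suffices to place the problem in RE. Because all gadgets are required to be deterministic and the connection graph is directed, the whole system evolves in a single, fully determined way: from any configuration, the configuration after one round is uniquely determined. So the procedure is simply to simulate this deterministic execution round by round, starting from the given initial configuration, and to halt and accept the moment some robot occupies the target location. If the instance is a ``yes'' instance --- some robot reaches the target in finite time --- this occurs after finitely many rounds and the simulation accepts; if it is a ``no'' instance, the simulation runs forever and never accepts. Hence the ``yes'' instances are recursively enumerable.

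To make this precise, I would represent a configuration by (i) the current state of every gadget and (ii) a finite list, in spawn order, of the locations currently occupied by the robots; the initial configuration is part of the input. Simulating one round means: for each robot in spawn order, resolve its turn; then, for each spawner in spawning order, append a new robot at that spawner's location; and after each such step, test whether any robot's location equals the target, accepting if so. Resolving a single turn is a bounded, purely mechanical computation: by the restrictions on the connection graph, every connection edge runs from an ``exit'' location (with at most one outgoing edge) to an ``entrance'' location (with only incoming edges), so from its current location the robot is either already stuck, or follows its unique outgoing connection edge to an entrance location and then makes the unique traversal (if any) prescribed by that gadget's current state together with the determinism assumption, ending the turn by completing one traversal or by getting stuck. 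In particular a turn always terminates and changes at most one gadget's state and one robot's location, so it is effectively computable.

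The only point needing a little care is that the configuration space is infinite, since the robot count grows without bound: after $r$ rounds there are at most $m + rs$ robots, where $m$ is the number of initial robots and $s$ the number of spawners. This is harmless, because each individual round still involves only finitely many robots and finitely many spawners, so each round --- and hence the configuration after any finite number of rounds --- is computable from the input, and whether some robot ever reaches the target is exactly whether the accepting test fires at some finite step of this computable sequence. One should also confirm that ``reaches the target in finite time'' is faithfully captured by testing the target condition after each turn and each spawn: those are the only moments at which a robot's position is ``observed,'' and every location a robot passes through during a turn (its starting location, the entrance location it moves to, and the exit location it ends at) is one of the tested positions.

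I do not expect a real obstacle here; the substance of the lemma is just the careful bookkeeping of a growing set of robots and the verification that turns terminate and round updates are computable despite the infinite configuration space. (The matching RE-hardness, proved via the counter-machine simulation, is where the actual difficulty lies.)
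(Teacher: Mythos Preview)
Your proposal is correct and follows exactly the same approach as the paper: simulate the deterministic evolution round by round and accept if a robot ever occupies the target. The paper's proof is in fact just a two-sentence sketch of this idea (finitely many robots at every step, so simulate), and you have supplied considerably more of the bookkeeping detail than the paper bothers with.
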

\begin{proof}
	After each step of the game, there will still be a finite, if increasing, number of robots. Thus to confirm if at least $1$ robot can reach the win location in finite time we can simply simulate the game for the needed finite number of steps.
\end{proof}

\subsection{RE-hardness}

We show that deciding robot reachability in 0-player directed-edge motion planning with spawners is RE-hard by reduction from the 
halting problem by simulating a 3-counter machine. 
First we introduce the gadgets that we show RE-hard.

\paragraph{Increment gadget.}
The \defn{increment gadget} is a 4-state 10-location gadget containing a 3-path \defn{lock branch} and a 3-path \defn{path selector}
(Figure~\ref{fig:Increment}).
When a robot traverses a path in the path selector, it enables a single path in the lock branch and locks the path selector.
When a robot traverses a path in the lock branch, the gadget reverts to the original state.
\begin{figure}
	\centering
	\includegraphics[width=.7\linewidth]{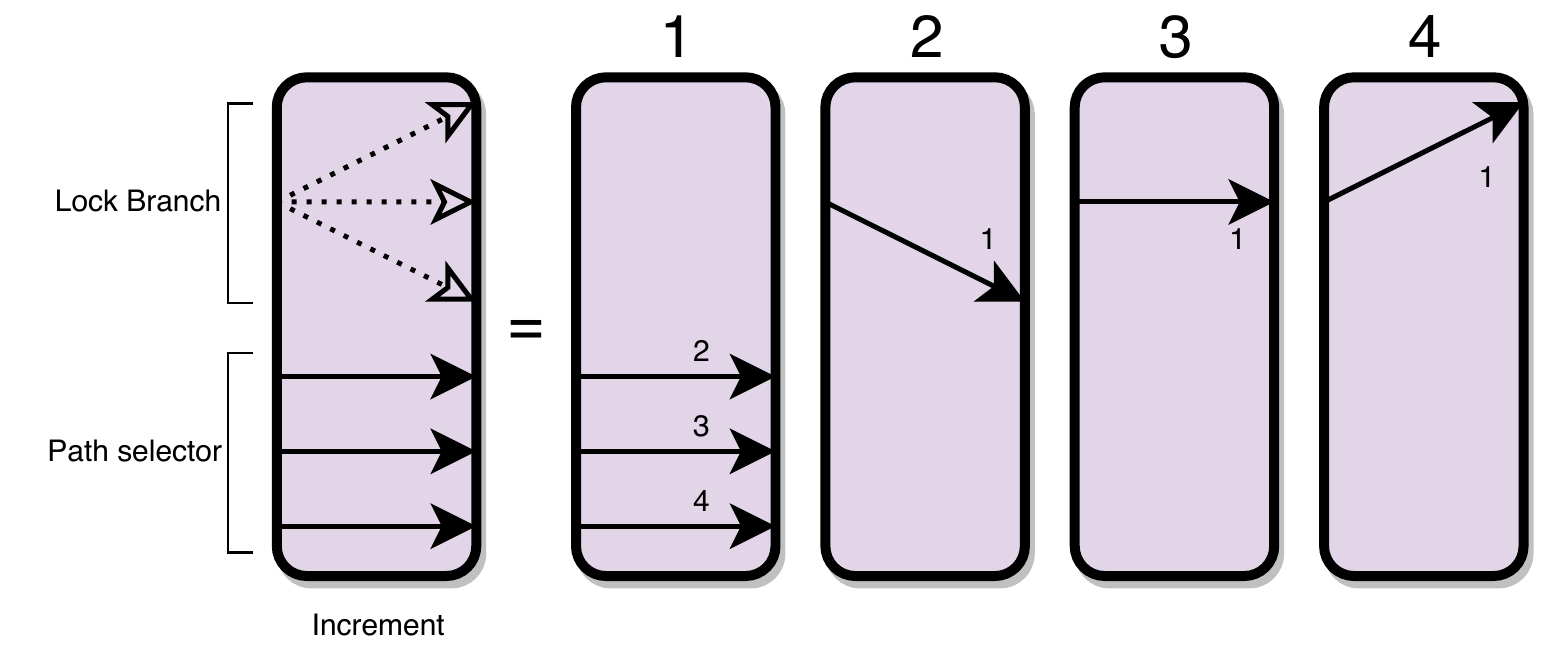}
	\caption{The increment gadget, shown with state transitions.}
	\label{fig:Increment}
\end{figure}

\paragraph{Register gadget.}
The \defn{register gadget} is a 3-state 10-location gadget containing a \defn{path selector}, a \defn{processing branch}, and a
\defn{response branch} (Figure~\ref{fig:Register}).
When a robot traverses the top path selector path, the path selector is locked and a path in the processing
branch is enabled. When a robot traverses the bottom path selector path, the path selector is locked and
the other processing branch path and a path in the response branch are enabled. If a robot traverses any non-path-selector path,
the gadget reverts to the original state.
\begin{figure}
	\centering
	\includegraphics[width=.7\linewidth]{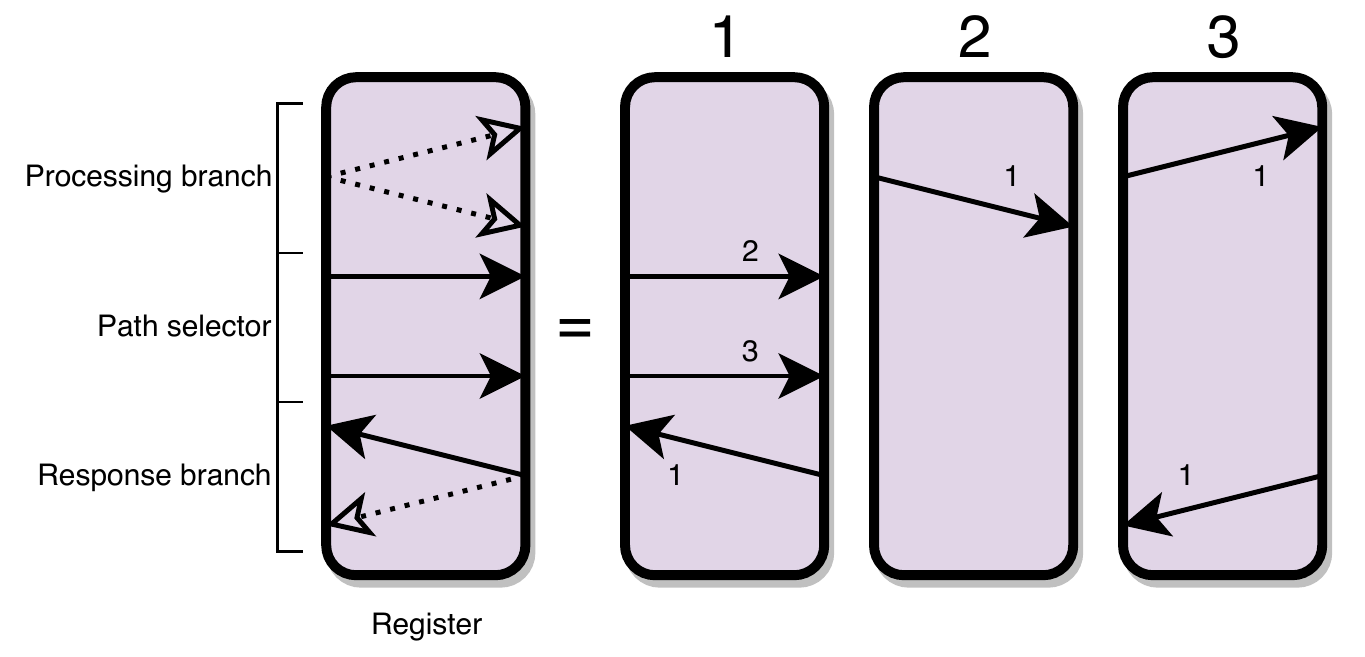}
	\caption{The register gadget, shown with state transitions.}
	\label{fig:Register}
\end{figure}

\paragraph{UPDSDS gadget.}
For the following theorem, we will also use the \defn{UPDSDS} gadget.
% (the gadget consisting of a set-up path and $2$ set-down switches).
This gadget has two states ``up'' and ``down'', a tunnel which sets the state to ``up'', and two \defn{set-up switches} which each have one input and two outputs, where the output taken depends on the state and traversing the switch sets the state to ``down''.

\begin{theorem}
	Deciding robot reachability for 0-player directed-edge motion planning with spawners is RE-hard with the spawner, increment, register,
	and UPDSDS gadgets combined.
\end{theorem}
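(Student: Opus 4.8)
The plan is to simulate a 3-counter machine using a 0-player system of spawner, increment, register, and UPDSDS gadgets, and to arrange things so that a designated target location is reached exactly when the counter machine halts. The counter machine has a finite control (a program with a constant number of lines, each of which either increments a counter, or decrements-if-positive / branches-on-zero), and three unbounded counters. The key design decision is \emph{how to represent counter values}: I would represent the value of counter $i$ by the number of ``live'' robots currently circulating in a dedicated loop of gadgets associated with counter $i$, or (dually, and probably more robustly given that spawners produce robots unconditionally each round) by the \emph{state} of a chain of register gadgets, using the robots merely as signals that shuttle state around. Either way, the finite control is encoded as a directed network of gadgets whose state records the current program line, and a single ``program-counter robot'' (or a small fixed number of them) walks through this network performing one instruction per pass.

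The key steps, in order, are: (1) fix the encoding — I would represent each counter by a long (but in the reduction, unbounded-length is fine since the machine is a fixed finite object only the tape/counters grow... actually counters grow, so I need a mechanism that \emph{grows}, which is exactly what spawners give us) structure, using the increment gadget to add a unit and the register gadget to test-and-decrement; (2) build the finite-control network from deterministic gadgets so that a traversing robot, arriving at the ``line $\ell$'' location, is routed to the gadget implementing instruction $\ell$, performs it, and emerges at the location for the successor line; (3) handle the round-robin scheduling and the unconditional spawning — this is the delicate part — by making sure every spawned robot that is \emph{not} the active program-counter robot is harmlessly absorbed or put to productive use (e.g., each spawner feeds exactly one counter's increment mechanism, or extra robots are funneled into a sink), so that the simulation is not corrupted by the extra robots that 0-player spawning forces upon us; (4) verify that the UPDSDS gadget's set-up-switch behavior gives the one-time-routing / reset behavior needed to implement the ``zero test'' correctly (a robot gets routed one way the first time and another way after, matching ``counter positive'' vs. ``counter zero'' and then resetting for the next test); (5) designate the target location to be the one the program-counter robot reaches upon executing the machine's HALT instruction, and argue reachability in finite time holds iff the machine halts (finiteness is automatic by the earlier RE membership lemma, and the forward direction because a halting computation has finitely many steps, each simulated by a bounded number of rounds).

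The main obstacle I expect is step (3): reconciling the \emph{forced, unconditional} nature of 0-player spawning and the round-robin turn order with a faithful simulation. In a counter machine, a counter is incremented only when the program says so, but a spawner emits a robot \emph{every round} no matter what — so I cannot naively equate ``robot emitted by spawner $i$'' with ``increment counter $i$''. The fix is to decouple the spawner's raw output from the counter semantics: route every spawned robot first through a gating sub-network controlled by the finite-control state, so that the robot is incorporated into counter $i$ only if the current instruction is $\mathrm{INC}(i)$ and is otherwise shunted to a dead-end (a location with only incoming edges, where it gets stuck forever and never reaches the target). I also need to make sure the bounded number of rounds per simulated step is \emph{uniform} enough that the program-counter robot and the counter-maintenance robots stay in phase — here the increment and register gadgets' lock/revert behavior is what keeps the sub-networks synchronized, since a ``locked'' path selector forces the system to complete one logical operation before the next signal can proceed. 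Getting these timing interlocks exactly right, and proving that no stray robot can ever sneak into the target location, is the technical heart of the argument; everything else (encoding a fixed finite control, appealing to RE-completeness of the 3-counter halting problem \cite{counter}, and invoking finiteness for membership) is routine.
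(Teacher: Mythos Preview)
Your proposal follows essentially the same strategy as the paper: reduce from the halting problem for 3-counter machines, represent each counter's value as a count of robots parked at a particular location, use a distinguished executor/program-counter robot to walk the finite control (implemented with UPDSDS gadgets), and route to the win location on \texttt{HALT}. You correctly identify the central difficulty --- spawners emit a robot every round regardless of the current instruction --- and your instinct that the increment/register gadgets' lock/revert behavior is what enforces synchronization is exactly right.

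The one substantive difference is your proposed fix for the spawner problem. You suggest gating each spawned robot on the current instruction and shunting it to a dead-end whenever the instruction is not an \texttt{INC}. The paper's solution is simpler and avoids having the gate depend on the program state at all: \emph{all} spawned robots (after the first, which becomes the executor via a one-shot set-up switch) accumulate at the entrance to the increment gadget's lock branch and wait there indefinitely. When the executor performs \texttt{INC}$(r)$, it traverses the increment gadget's path selector, which opens exactly one of the three lock-branch paths; exactly one waiting robot then crosses to register~$r$, and the gadget reverts, re-sticking the rest. This buys you two things your scheme would otherwise have to engineer: you need not propagate the current instruction into the spawner's routing (which would require updating gadget states on every program step), and you get ``exactly one robot per \texttt{INC}'' for free from the gadget's single-shot lock branch, without worrying about how many rounds an \texttt{INC} spans.

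Two minor notes. First, your alternative of encoding a counter in the \emph{state} of a chain of register gadgets cannot work, since any finite chain of finite-state gadgets has bounded capacity while counters are unbounded --- robots are the only unbounded resource here, so the first option is forced. Second, the paper's \texttt{JZ} implementation exploits the round-robin turn order directly: if the register's value is nonzero, a parked robot takes its turn \emph{before} the executor's next turn and resets the response branch, so the executor's outgoing path depends precisely on whether anyone was waiting. That is the concrete realization of the zero test you allude to in step~(4).
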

\begin{proof}
	We reduce from the halting problem of the 3-counter machine with \texttt{INC(r)}, \texttt{DEC(r)}, and
	\texttt{JZ(r, z)} instructions, which is undecidable (\cite{counter}).
	We will need to implement the \texttt{INC(r)} (increment register $r$ by 1), \texttt{DEC(r)} (decrement $r$ by 1), and
	\texttt{JZ(r, z)} (jump to instruction $z$ if $r$ is 0) instructions of a counter machine. We will not worry about
	decrementing a register that is already 0, because all \texttt{DEC} instructions can be preceded by \texttt{JZ} to guard against that.
	We will also implement the \texttt{HALT} instruction, which should result in a win.
	
	%\medskip
	First we implement a \defn{register}, which will store a nonnegative integer, just like a register in a counter machine. This,
	of course, uses the register gadget, and the implementation is shown in Figure~\ref{fig:Register-impl}.
	In this implementation, the value of a register gadget is the number of robots stuck
	at the entrance of the processing branch. If a robot $b$ crosses the \emph{decrement in} path, a single robot can cross the gadget to
	the sink, where it is stuck forever,
	and all other robots stuck at the entrance stay stuck. Robot $b$ goes through the \emph{out} path on its next turn.
	This decrements the value of the gadget by $1$, thus implementing
	\texttt{DEC}, taking 1 round to process. If a robot $b$ crosses the \emph{jump-zero in} path, then if the gadget's value is
	nonzero, a single robot $b'$ crosses the top path of the processing branch,
	reverting the gadget's state, and forcing $b$ to traverse the top path of
	the response branch on its next turn, which leads to the \emph{out} path. $b'$ gets stuck back at the entrance on its next turn.
	However, if the gadget's value is $0$, then no robot will traverse the processing branch, which lets $b$ traverse the bottom path of
	the response branch on its next turn. This does not change the value of the gadget, and changes the path of $b$ iff
	the value is $0$, thus implementing \texttt{JZ}, taking 2 rounds to process.
	\begin{figure}
		\centering
		\includegraphics[width=.7\linewidth]{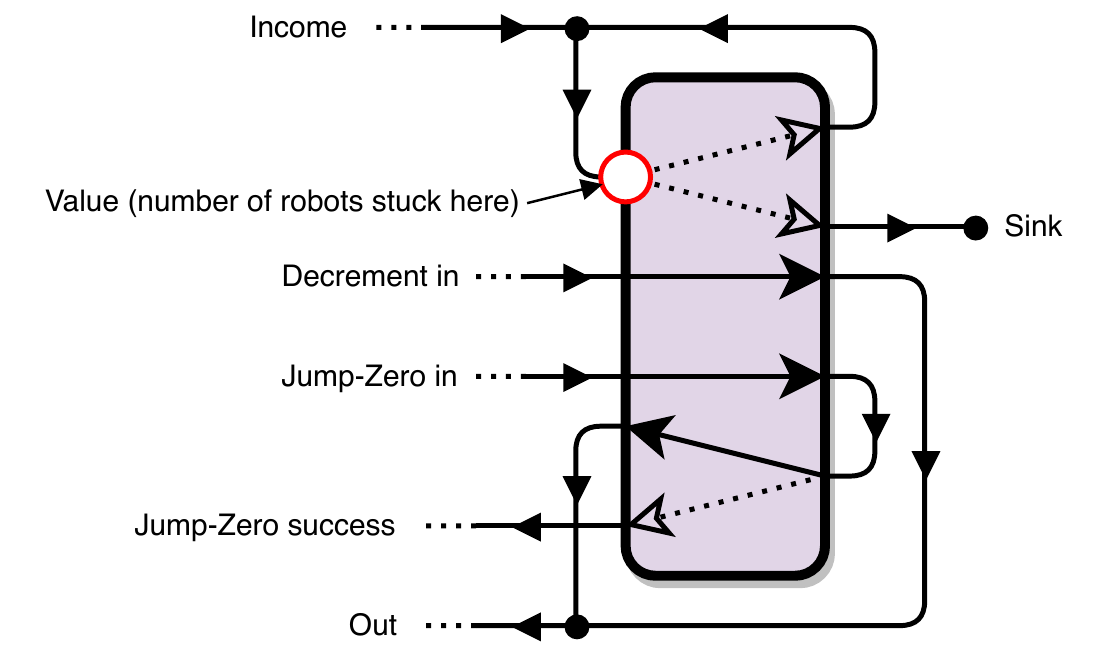}
		\caption{Implementation of the register of a counter machine.}
		\label{fig:Register-impl}
	\end{figure}
	
	%\medskip
	To implement \texttt{INC}, we need a place that robots can come from. For this, we have the setup shown in Figure~\ref{fig:Increment-impl}.
	This setup contains a spawner gadget. Spawned robots go through
	the US gadget (a set-up switch, simulated by using one switch of the UPDSDS gadget and flipping it)
	to the entrance of the lock branch of the increment gadget and get stuck. It takes $2$ turns for this to happen. The first
	robot $b$ to get spawned instead takes the bottom path of the US gadget and executes the program. So during the 4th and later
	rounds, an extra robot gets stuck at the increment gadget. When robot $b$ goes through the \emph{increment $r_i$ in} path,
	a single robot $b'$ at the increment gadget traverses the lock branch, goes to the \emph{income} entrance of $r_i$, and gets stuck at that
	register gadget's processing branch on its next turn, incrementing said register gadget's value.
	In the process, the increment gadget reverts to its original state. This implements \texttt{INC}, taking 2 rounds to process,
	and we only need to make sure
	that $b$ does not traverse the path selector of the increment gadget before the 4th round to ensure that there will
	be a robot $b'$ that goes to a register.
	\begin{figure}
		\centering
		\includegraphics[width=.7\linewidth]{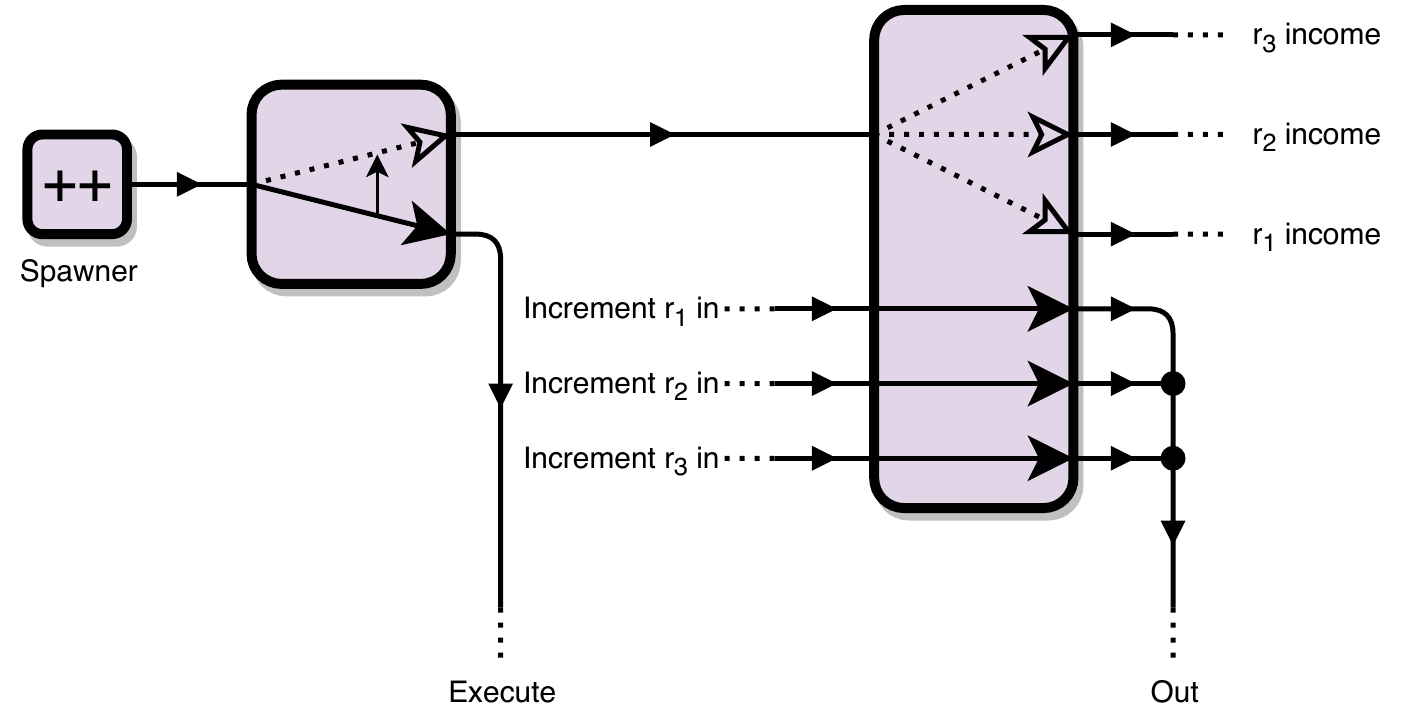}
		\caption{The context of the increment gadget, along with the spawner and a US gadget.}
		\label{fig:Increment-impl}
	\end{figure}
	
	%\medskip
	We also need to implement the program, and we use UPDSDS gadgets for that, as shown in Figure~\ref{fig:UPDSDS-impl}. A UPDSDS-gadget
	instruction contains an \emph{execute in} entrance, a \emph{pass in} entrance, a \emph{jump in} entrance,
	a \emph{jump destination} entrance, an \emph{execute out} exit, an \emph{execute next} exit, a \emph{pass next} exit,
	a \emph{jump next} exit, and a \emph{jump out} exit.
	Only the executor robot is allowed to traverse this gadget.
	
	%\smallskip
	The \emph{execute out} exit leads to the proper location of the
	increment or register gadgets. For an \texttt{INC(r)} instruction, it leads to the \emph{increment $r$ in} entrance of the increment gadget.
	For a \texttt{DEC(r)} instruction, it leads to the \emph{decrement in} entrance of the register gadget for register $r$. For a
	\texttt{JZ(r, z)} instruction, it leads to the \emph{jump-zero in} entrance of the register gadget for register $r$. For
	a \texttt{HALT} instruction, it leads directly to the win location.
	
	The \emph{execute next} exit leads to the \emph{execute in} entrance of the next instruction. The \emph{pass next} exit leads to
	the \emph{pass in} entrance of the next instruction. The \emph{jump out} exit leads to the \emph{jump destination} entrance of
	instruction $z$ for a \texttt{JZ(r, z)} gadget, and doesn't exist otherwise. The \emph{jump next} exit leads to the
	\emph{jump in} entrance of the next instruction.
	\begin{figure}
		\centering
		\includegraphics[width=.7\linewidth]{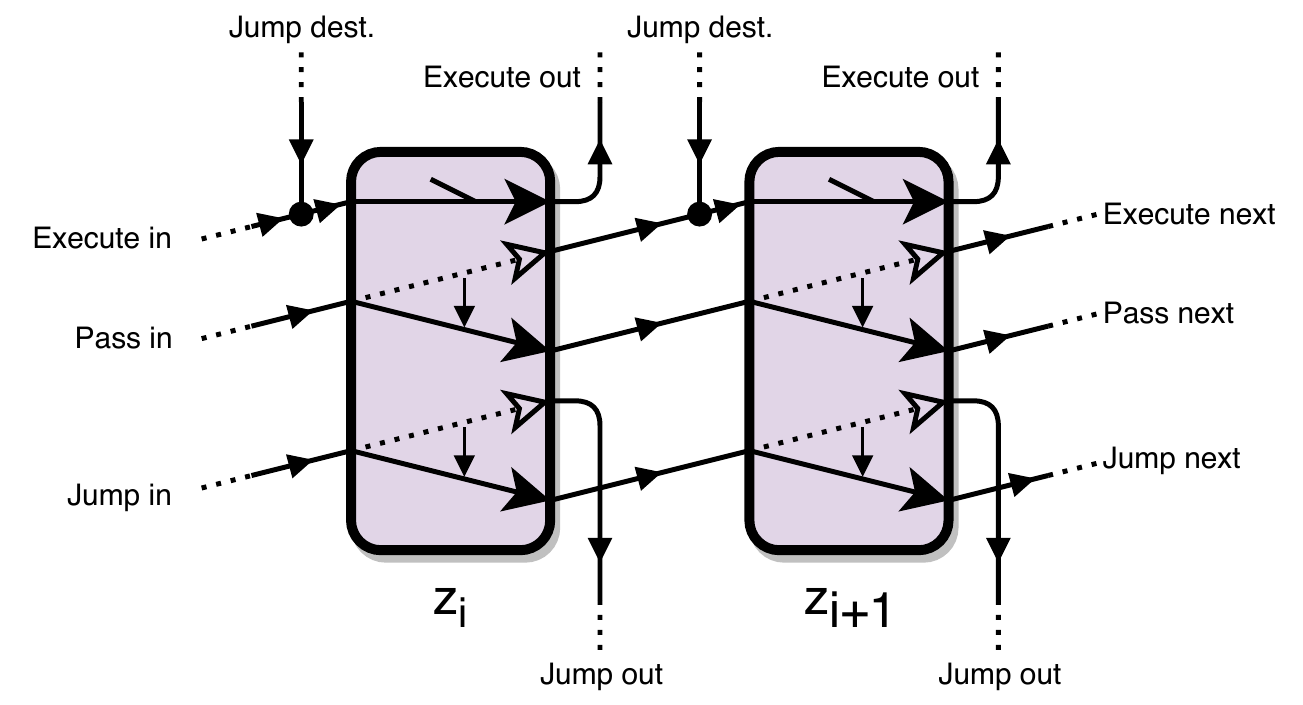}
		\caption{Two instructions implemented using UPDSDS gadgets.}
		\label{fig:UPDSDS-impl}
	\end{figure}
	
	This reduction can be done in polynomial time with respect to the number of instructions, because each instruction is
	simulated with 1 UPDSDS gadget, and there are a constant number of constant-size gadgets other than these.
	
	We now describe the behavior of the entire simulation, with an example shown in Figure~\ref{fig:counter}.
	\begin{itemize}
		\item A robot spawns from the spawner.
		\item The robot that spawned takes the bottom path of the US gadget, setting it to the \emph{up} state permanently. This
		robot is the executor robot. Another robot spawns from the spawner.
		\item The executor robot takes the top path of the UPDSDS gadget representing the first instruction. The newly spawned robot
		crosses the US gadget. Another robot spawns from the spawner.
		\item If the executor robot is executing an \texttt{INC} instruction, it traverses the path selector of the increment gadget.
		This is the 4th (or later) round, so there will be a robot ready to traverse the lock branch of the increment gadget.
		\item When the executor robot finishes executing an instruction that doesn't lead to a jump, it travels along the upper
		set-down switches of the UPDSDS gadgets until it finds the one representing the instruction it was executing. It resets that
		gadget and executes the next instruction, flipping the state of the next UPDSDS gadget.
		\item If the instruction led to a jump instead, the executor robot travels along the lower set-down switches of the UPDSDS gadgets
		until it finds the one representing the instruction it was executing. It resets that gadget and takes the \emph{jump next} path
		to the destination UPDSDS gadget of the jump, then executes the corresponding instruction.
		\item If the executor robot reaches the top path of the UPDSDS gadget representing the \texttt{HALT} instruction, it
		goes to the win location.
	\end{itemize}
	
	\begin{figure}
		\centering
		\includegraphics[width=\linewidth]{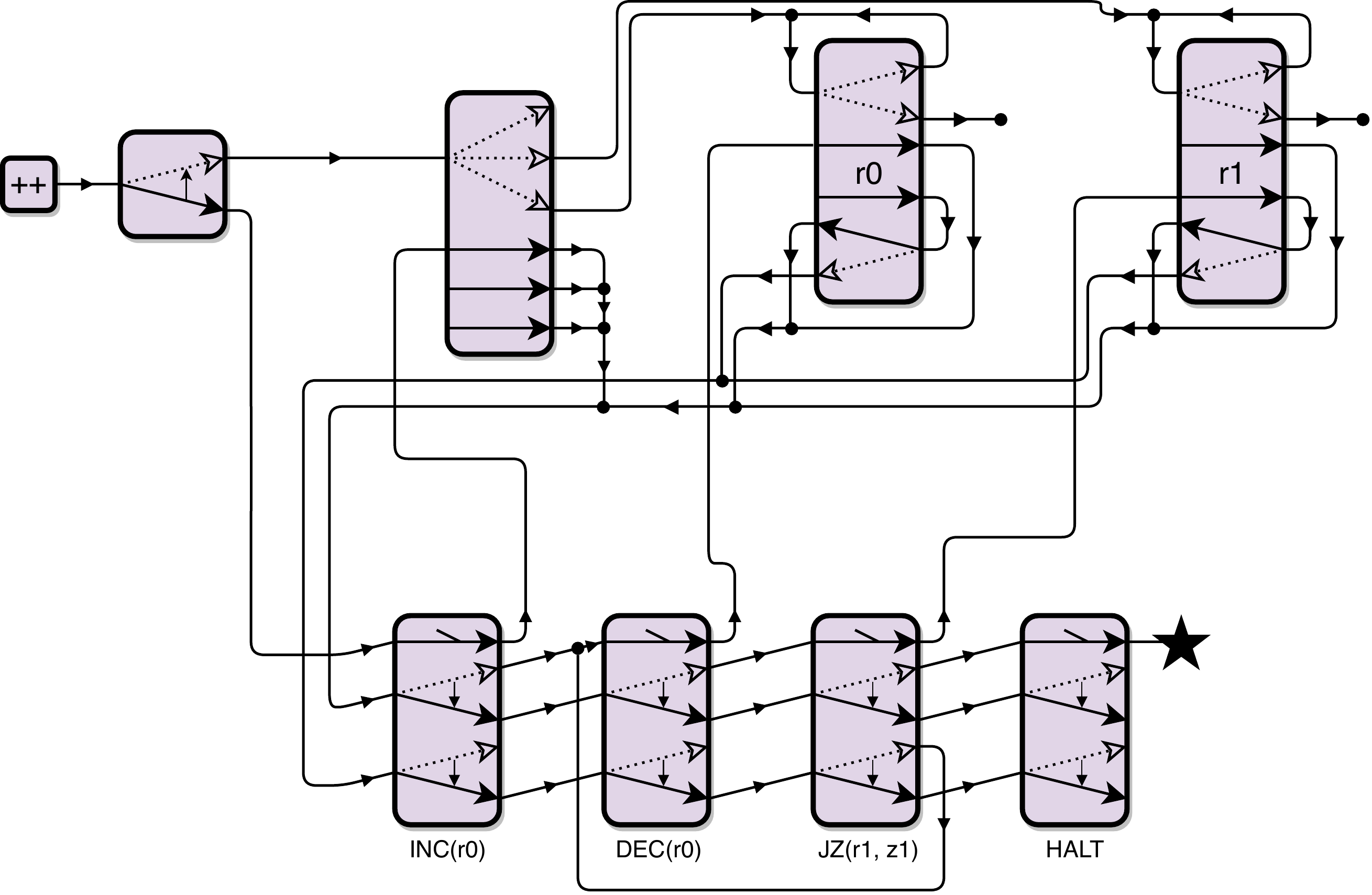}
		\caption{A 2-counter machine constructed with the gadgets. 2 counters are shown instead of 3 to save space.}
		\label{fig:counter}
	\end{figure}
	
	So this simulates a 3-counter machine. So if the 3-counter machine halts, then a robot will reach the win location in finite time,
	and vice versa.
\end{proof}

%%%%%%%%%%%%%%%%%%%%%%%%%%%%%%%%%%%%%%%%%%%%%%%%%%%%%%%%%%%%%%%%%%%%%%%%

\section{1-Player Motion Planning with Spawners and/or Destroyers}
\label{sec:1playerInf}

In this section, we investigate 1-player motion planning with multiple robots,
where a single player controls a set of robots, with the ability to
separately command each, moving any one robot at a time. There is no limit to the number of robots that can be at a given location.
We include a \defn{spawner} gadget (as in Section~\ref{sec:spawners})
which the player can use to produce a new robot at a specific location,
providing an unlimited source of robots at that location.
We optionally also include a \defn{destroyer} gadget,
which deletes any robot that reaches a specified sink location;
such removal plays a role when we consider the \defn{targeted reconfiguration} problem
where the goal is to achieve an exact pattern of robots at the locations. If a system of gadgets only has a single spawner gadget we call that gadget the \defn{source} and if the system only has a single destroyer gadget we call that the \defn{sink}.

We show an equivalence between this 1-player motion planning
problem and corresponding problems on Petri nets.
Through these connections, we establish EXPSPACE-completeness for reachability;
PSPACE-completeness for reconfiguration with a spawner;
and ACKERMANN-completeness for reconfiguration with a spawner and a
destroyer.

\subsection{Petri Nets}
Petri nets are used to model distributed systems using tokens divided into dishes, and rules which define possible interactions between dishes. This is a natural model since many equivalent models have been defined such as Vector Addition Systems and Chemical Reaction Networks.

%Volume Increasing Chemical Reaction Networks, described in \cite{alaniz2022reachability} have the restriction that reactions only increase the volume of the system.  The models are equivalent we can define this analogously for Petri nets. A Volume Increasing Petri net is one where the total number of tokens only increase by firings of the system. 

\begin{definition}
A \defn{Petri net} $\{D, R\}$ consists of a set of dishes $D$ and rules $R$.  A configuration $t$ is a vector over the elements of $D$ which represents the number of tokens in each dish. Each rule $(u, v) \in R$ is a pair of vectors over $D$. A rule can be applied to a configuration $d_0$ if $d_0 - u$ contains no negative integers to change the configuration to $d_1 = d_0 - u + v$. The volume of a configuration denoted $|d|$ is the sum of all its elements. 
%A Petri net is said to be \defn{volume increasing} if for each rule, the sum of all elements of $a$ is less than or equal to $b$, i.e. the number of tokens can only increase. Rules are represented graphics such as in Figure~\ref{fig:petriNet}.

\end{definition}

\begin{definition}
A reachable set for a Petri-net configuration, denoted $REACH_P(\{D, R\}, t)$, is the set of configurations of a Petri net reachable starting in configuration $t$ and applying rules from $R$.  
\end{definition}

We can view a system of gadgets with multiple robots as a set of gadget states $\Gamma$ and a vector $l$ indicating the counts of robots at each location. We can define the set of reachable targeted configurations as $REACH(\Gamma, l)$ similarity to Petri nets. 

\begin{figure}
	\centering
	\includegraphics[width=0.3\linewidth]{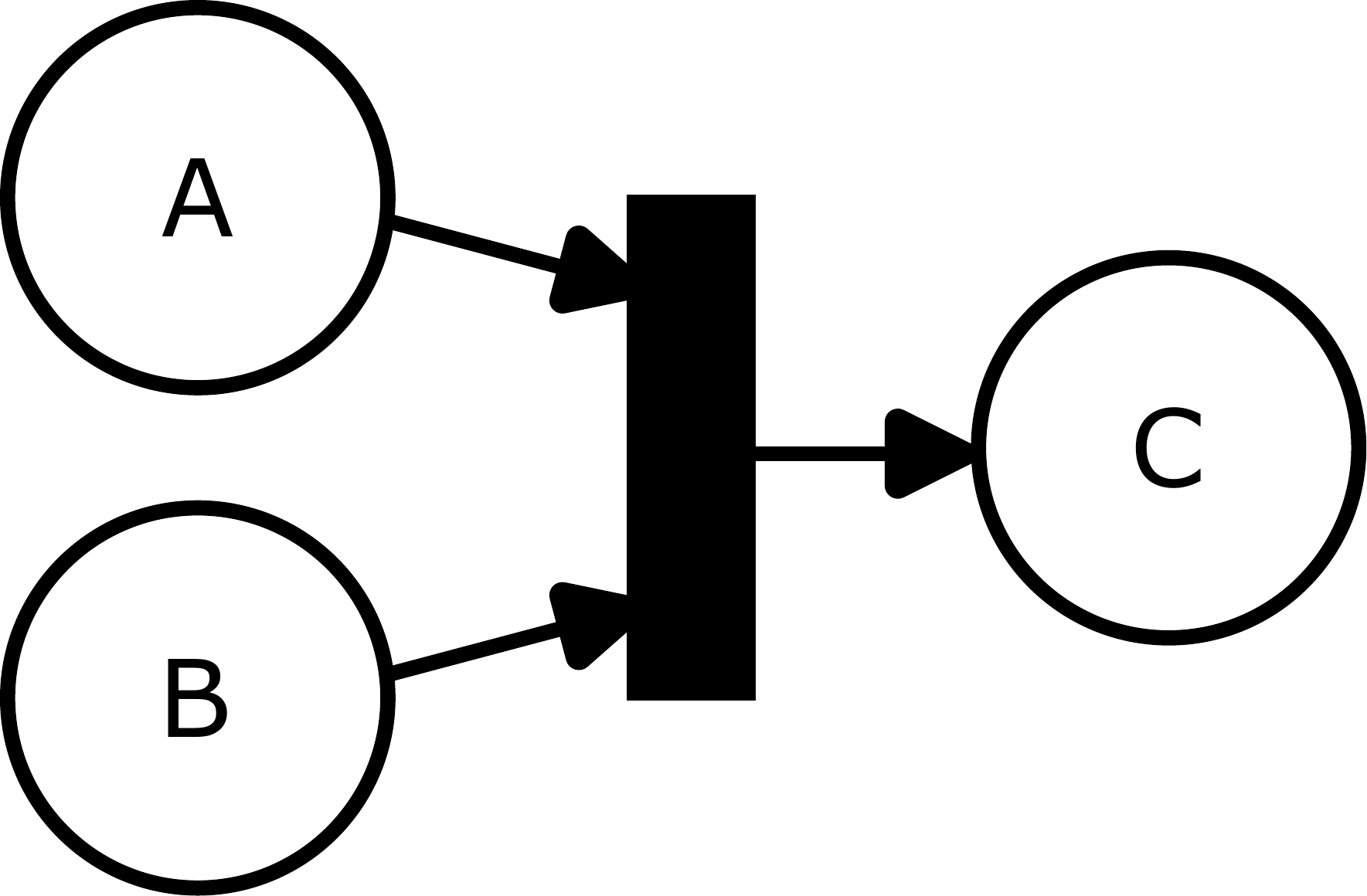}
	\caption{General Petri-net rule $(u,v)$, where $u$'s nonzero dishes are shown on the left side and $v$'s nonzero dishes are shown on the right side.}
	\label{fig:petriNet}
\end{figure}

%\begin{figure}
%	\centering
%	\includegraphics[width=0.3\linewidth]{petriNetVI.pdf}
%	\caption{Simulate of Petri-net rule with Volume Increasing Petri Nets}
%	\label{fig:petriNet}
%\end{figure}

\subsection{Equivalence between Petri Nets and Gadgets}
We present transformations that turn Petri nets into gadgets, and gadgets into Petri nets. We use these simulations to prove the complexity of robot reachability and reconfiguration with arbitrarily many robots.

\paragraph{Gadgets to Petri Nets.} 
We can transform a set of gadgets into a Petri net where each location, besides the source and sink, is represented as a \emph{robot dish}. Each gadget besides the spawner and destroyer is given a number of \emph{state dishes} equal to its states, and each transition of the gadget is represented by a \emph{rule}. The set of dishes $D$ is $D_{STATE} \cup D_{LOCT}$, the union of state and robot dish sets, respectively.

A configuration of robots and gadgets is represented by a Petri-net configuration $t$ satisfying the following:
\begin{itemize}
	\item Each $k$-state gadget is simulated by $k$ unique dishes in $D_{STATE}$, one per state. The state of the gadget is represented by a single token which is contained in the corresponding dish, and the other $k-1$ dishes are empty.
	\item Each location in the system of gadgets is simulated by a unique dish in $D_{LOCT}$. The number of tokens in that dish is equal to the number of robots at that location. 
\end{itemize}

A Petri net $\{D, R\}$ simulates a system of gadgets $G$ if for any configuration $\{\Gamma, l\}$ of $G$ represented by Petri-net configuration $t$, each configuration in $REACH_G(\Gamma, I)$ is represented by a configuration $REACH_P(\{D, R\}, t)$ and each configuration in $REACH_P(\{D, R\}, t)$ represents a configuration in $REACH_G(\Gamma, I)$.

\begin{figure}
	\centering
	\includegraphics[width=0.4\linewidth]{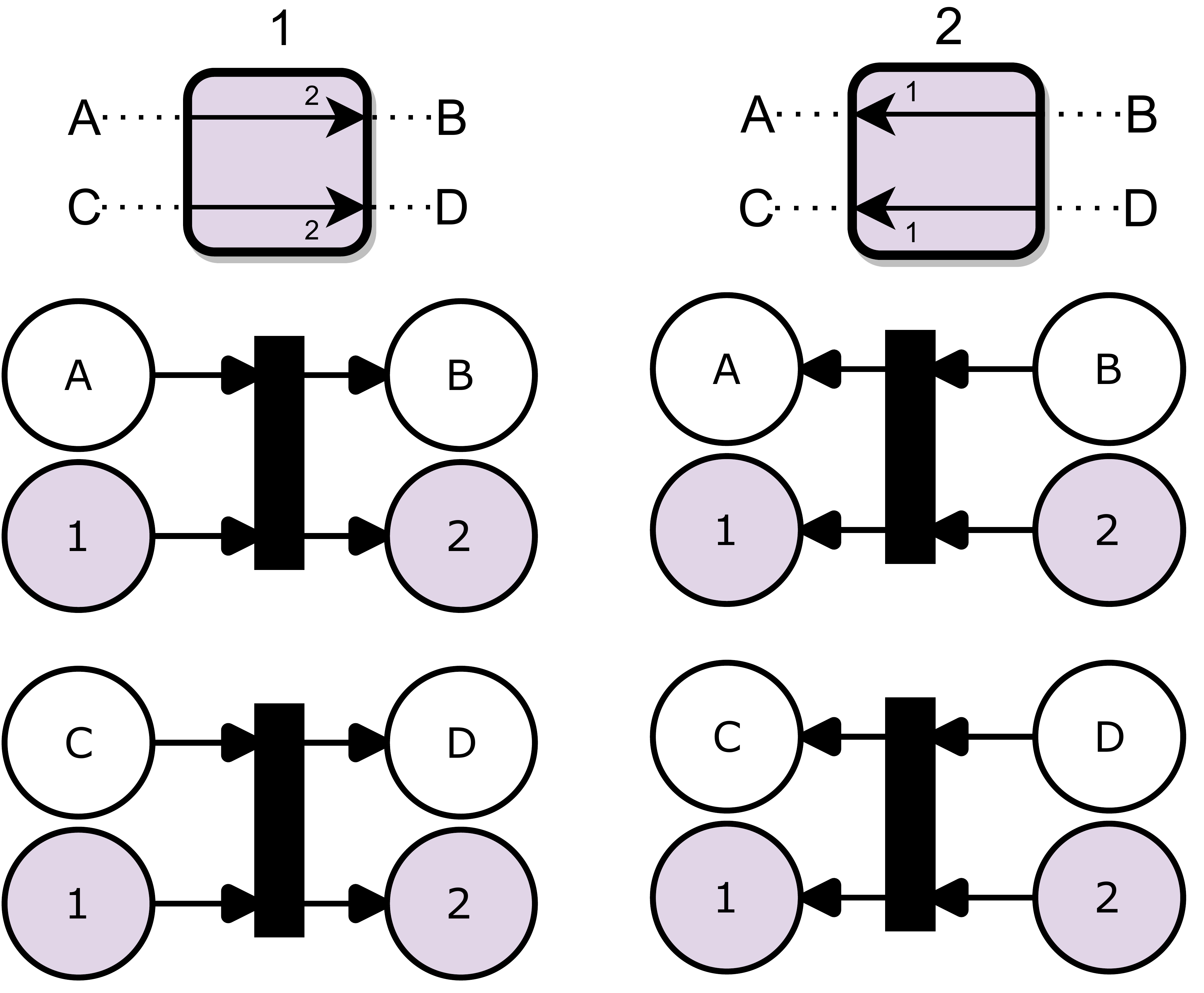}
	\caption{Petri-net rules which simulate a 2-tunnel toggle gadget.}
	\label{fig:gadgetTOpetri}
\end{figure}

\begin{lemma}\label{lem:gadgets2petri}
For any set of deterministic gadgets $S$, any system of multiple copies of gadgets in $S$ with a spawner (and optionally, a destroyer) can be simulated by a Petri net.
\end{lemma}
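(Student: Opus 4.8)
The plan is to spell out the construction sketched in the ``Gadgets to Petri Nets'' paragraph and then check, by a routine induction, that it meets the definition of ``simulates'' given just before the lemma. First I would fix the dishes. Contract every connected component of the connection graph to a single \emph{super-location}; call the super-location containing the spawner the \emph{source}, and the one containing the destroyer (if present) the \emph{sink}. Set $D = D_{\mathrm{STATE}} \cup D_{\mathrm{LOCT}}$, where $D_{\mathrm{STATE}}$ contains a dish $[g,q]$ for every state $q$ of every gadget $g$ other than the spawner and destroyer, and $D_{\mathrm{LOCT}}$ contains a dish $[\ell]$ for every super-location $\ell$ that is neither the source nor the sink. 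A gadget configuration $\{\Gamma, l\}$ is encoded by the Petri-net configuration $t$ that places one token in $[g,\Gamma(g)]$ for each gadget $g$ and $l(\ell)$ tokens in $[\ell]$ for each recorded super-location $\ell$ (robots currently sitting at the source or sink are simply not recorded). Every gadget configuration has such an encoding, so in particular the given configuration $\{\Gamma,l\}$ does.

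Next I would write the rules. For a gadget $g$ (not the spawner or destroyer) and a transition in its transition graph from $(q,a)$ to $(q',b)$ --- a traversal from entrance $a$ in state $q$ leaving at $b$ and setting the state to $q'$ --- let $\ell_a,\ell_b$ be the super-locations of $a,b$, and add the rule $(u,v)$ with $u=[g,q]+\chi(\ell_a)$ and $v=[g,q']+\chi(\ell_b)$, where $\chi(\ell)$ is $[\ell]$ when $\ell$ is recorded and $\vec 0$ when $\ell$ is the source or sink. Thus a traversal originating at the source consumes no robot token (the source has an unbounded supply) and one ending at the sink produces none (that robot is destroyed), while the spawner and destroyer themselves contribute no dishes and no rules. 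Note that moving a robot through the connection graph is a free action that has already been absorbed by the contraction, so in the 1-player model a single legal move of a single robot is exactly one gadget traversal, hence exactly one rule application, and conversely. (Determinism of the gadgets in $S$ is retained only to keep this statement aligned with its converse, and to sidestep how gadget nondeterminism would be resolved; the construction itself does not otherwise depend on it.)

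Then the verification. The key invariant is that, starting from an encoded configuration $t$, every configuration in $REACH_P(\{D,R\},t)$ is again the encoding of some gadget configuration: each rule removes one token from a state dish of some gadget $g$ and adds one to another state dish of the \emph{same} $g$, so the total occupancy of $\{[g,q] : q\}$ stays equal to $1$, and robot-token counts stay nonnegative by the definition of rule application. Granting the invariant, a single induction on the length of the play gives both halves of the simulation relation: every sequence of moves in the gadget system maps move-by-move to a valid firing sequence reaching the encoding of the resulting configuration, and every firing sequence from $t$ maps rule-by-rule to a legal sequence of moves --- here legality of a move matches the guard $d_0 - u \ge \vec 0$, since ``gadget $g$ is in state $q$'' corresponds to $[g,q]$ holding its token and ``a robot is available at $\ell_a$'' corresponds to $[\ell_a]$ being nonempty (a vacuous requirement when $\ell_a$ is the source, matching the fact that the player can always spawn and move a robot there). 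Combining the two halves, $REACH_P(\{D,R\},t)$ is exactly the set of encodings of configurations in $REACH_G(\Gamma,l)$, which is what ``simulates'' asks for.

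I expect the only real care to be in the source/sink bookkeeping: one must confirm that treating the source as an unbounded reservoir (rules that fire without a robot token) and the sink as an absorbing drain (departing robots never recorded) never lets either model outrun the other. This holds because a robot sitting at the source is interchangeable with a robot that has not yet been spawned, and a robot that has reached the sink can never move again --- so neither is ever needed at a recorded location, and the recorded part of the configuration evolves identically on both sides. The other point worth stating explicitly, since the simulation relation is phrased in terms of reachable sets, is the state-dish invariant above, which is what guarantees that every reachable Petri-net configuration really is in the image of the encoding.
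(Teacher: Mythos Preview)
Your proposal is correct and follows essentially the same construction as the paper: state dishes per gadget state, location dishes per (contracted) location excluding source and sink, one rule per gadget transition that shuffles a state token and a robot token, with the source/sink handled by dropping the corresponding robot-token input/output. Your write-up is more carefully argued than the paper's own proof --- in particular the explicit state-dish invariant and the two-direction induction are only implicit there --- but the approach is the same.
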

\begin{proof}

We first explain how to create the rules for gadgets that are not connected to the source or sink locations. Each gadget transition will be represented by a unique rule. For example the 2-tunnel toggle gadget is shown in Figure~\ref{fig:gadgetTOpetri} and has four transitions. It can be traversed:
\begin{itemize}
	\item from $A$ to $B$ in state $1$,
	\item from $C$ to $D$ in state $1$,
	\item from $B$ to $A$ in state $2$, and
	\item from $D$ to $C$ in state $2$.
\end{itemize}

The four corresponding rules for the gadget are drawn in Figure~\ref{fig:gadgetTOpetri} as well. Each rule takes in one token from a robot dish and one from a state dish, and places one token in a robot dish and one in a state dish. The token being moved between robot dishes models moving one robot across a gadget, and the token being moved between state dishes models the state change of the gadget.

If a gadget is connected to the source, any transition from the source is represented by a rule that only takes in a state token, producing two tokens. One token is output to a location dish and one to a state dish. If a transition is connected to the sink then the rule takes in two tokens and outputs only a state token. These special cases are shown in Figure~\ref{fig:sinkSource}. Note that we do not have an actual dish for the source so the player may spawn multiple robots at the source but they do not appear in the simulation until they traverse a gadget.

\begin{figure}
	\centering
	\includegraphics[width=0.45\linewidth]{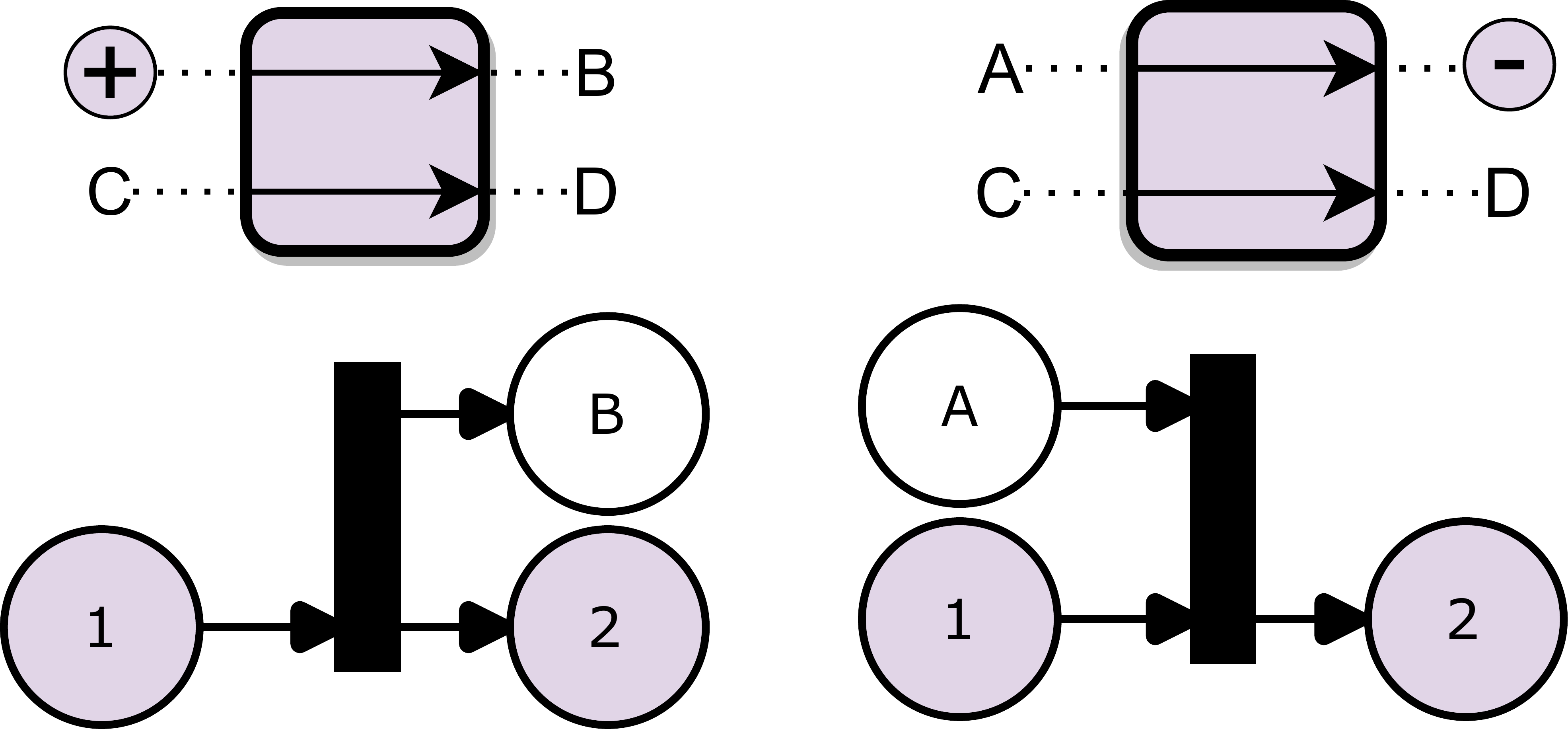}
	\caption{Left: Rule we include when a gadget can be traversed from the source. Right: Rule we include when a traversal leads to the sink.}
	\label{fig:sinkSource}
\end{figure}

For each configuration of a system of gadgets, there exists a configuration of the Petri net with dishes that represent the gadgets and locations. Each rule of the Petri net acts as a traversal of a robot changing the state of a gadget. The rules need the gadgets state token to be in the correct dish, and a robot token in the location dish representing the start traversal.
\end{proof}

\paragraph{Petri Nets to Gadgets.}
We simulate a Petri net with symmetric self-closing doors using a location for each dish, where each rule is represented by multiple gadgets. We also have a single \defn{control robot} which starts in a location we call the \defn{control room}. The other robots are \defn{token robots} which represent the tokens in each dish.  At a high level, our simulation works by ``consuming'' the input tokens to a rule to open a series of tunnels for the control robot to traverse. The control robot then opens a gadget for each output to allow token robots to traverse into their new dishes. We use the source and sink to increase and decrease rules as needed. Figure~\ref{fig:petriTOgadget} gives an overview. %The gadget for the simulation we use is a self closing symmetric door.  

\begin{figure}
	\centering
	\includegraphics[width=0.5\linewidth]{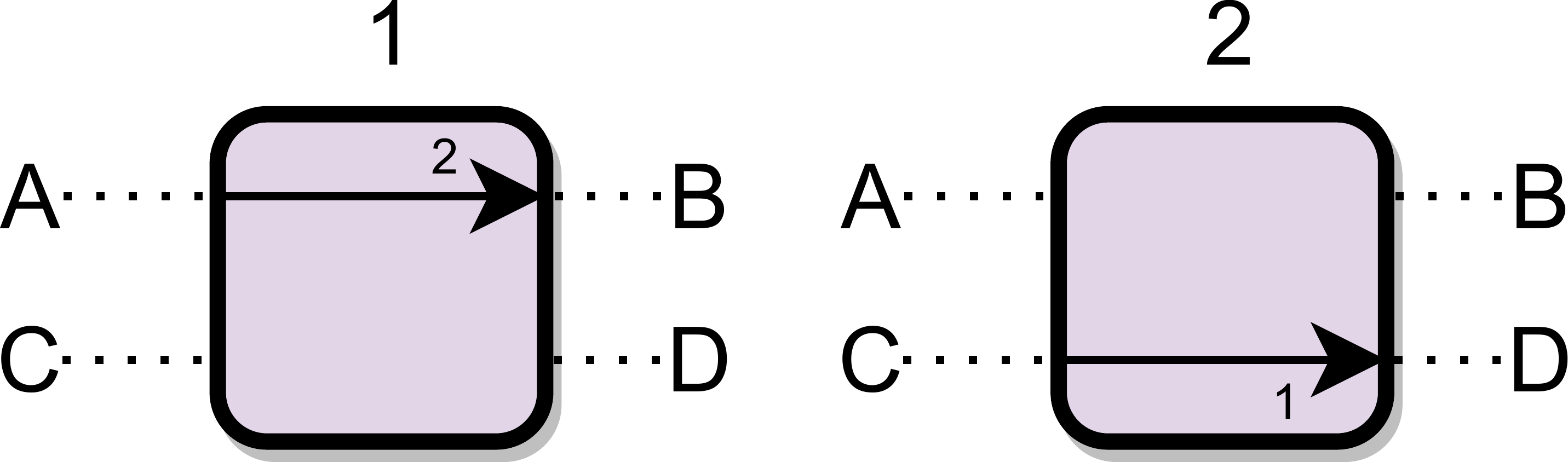}
	\caption{Symmetric self-closing door.}
	\label{fig:selfclosingDoor}
\end{figure}

\textbf{Symmetric self-closing door.} The \defn{symmetric self-closing door} is a $2$-state $2$-tunnel gadget shown in Figure~\ref{fig:selfclosingDoor}. The states are $\{1, 2\}$ and the traversals are 
\begin{itemize}
	\item in state $1$ from $A$ to $B$ changing state to $2$, and
	\item in state $2$ from $C$ to $D$ changing state to $1$. 
\end{itemize} 

\begin{figure}
	\centering
	\includegraphics[width=0.9\linewidth]{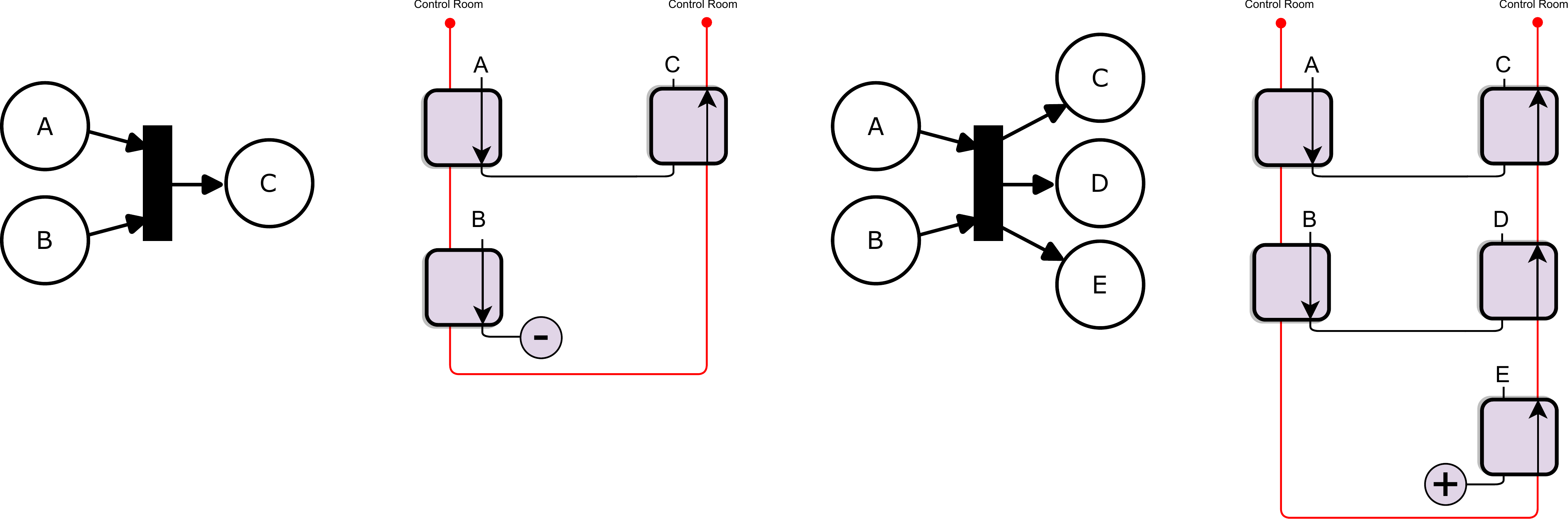}
	\caption{How to simulate a rule which decreases volume (Left) and a rule which increases volume (Right). }
	\label{fig:petriTOgadget}
\end{figure}

Using this simulation we prove two problems in Petri-nets are polynomial time reducible to the gadgets problems we are interested in. \cite{esparza2005decidability} lists many problems including the ones we describe here.\footnote{Problems names may differ.} First is production, this problem asks given a Petri-net configuration and a target dish, does there exist a reachable configuration which contains at least one token in the target dish.  Configuration reachability asks given an initial and target configuration, is the target reachable from the initial configuration.

\begin{lemma}\label{lem:Petri2Gadgets}
Production in Petri nets is polynomial time reducible to robot reachability with the symmetric self-closing door and a spawner. Configuration reachability in Petri nets is polynomial-time reducible to multi-robot targeted reconfiguration with the symmetric self-closing door and a spawner. 
\end{lemma}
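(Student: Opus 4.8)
The plan is to describe, for each Petri-net rule $(u,v)$, an explicit "rule gadget block" built from symmetric self-closing doors, and to route the control robot through these blocks so that one full pass through a block simulates exactly one application of the rule. The key design principle, already sketched in Figure~\ref{fig:petriTOgadget}, is to use the self-closing doors in two modes: an \emph{input-consuming} mode, where a token robot sitting at a dish-location traverses the $A\to B$ side of a door to leave it in state $2$ (recording "one token consumed"), and an \emph{output-producing} mode, where the control robot, having passed through all the input doors, traverses a chain of doors that each releases one token robot into an output dish-location. I would first handle a single rule in isolation and then explain how several rules are wired together at a shared control room.

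First I would set up the location structure: one location $\ell_d$ per dish $d\in D$, a control room location $c$, and for each rule $r=(u,v)$ a gadget block $B_r$. Inside $B_r$, for each dish $d$ with $u_d>0$ I place $u_d$ self-closing doors in series on the control robot's path; the "enabling" side of door $j$ (for copy $j$ of dish $d$) is reachable only from $\ell_d$, so a token robot must walk out of $\ell_d$, through that door, and to a holding location, flipping the door to the state that lets the control robot pass. Thus the control robot can only traverse the input section of $B_r$ after $|u|$ token robots have been removed from the appropriate dish-locations — which is exactly the precondition $d_0 - u \ge 0$ and the subtraction $-u$. Symmetrically, the output section of $B_r$ has, for each dish $d$ with $v_d>0$, a chain of $v_d$ doors that the control robot traverses to flip them open, each open door then admitting one token robot from a holding reservoir into $\ell_d$; the reservoir is fed by the spawner (the \emph{source}), so there is always a supply. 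Crucially, after the control robot finishes $B_r$ all these doors have self-closed back to their original states, so the block is reusable and the only net effect on the configuration is $d_0 \mapsto d_0 - u + v$. For reachability (production) I add a connection from $\ell_{d^\star}$ (or a sentinel robot entering it) to the win location; for reconfiguration I make the target configuration demand the specified token counts at every $\ell_d$, the control robot back in the control room, and every door in its home state.

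The main obstacle — and the step I expect to need the most care — is enforcing \textbf{atomicity} of a rule application: preventing the control robot from entering the output section having consumed only some of the inputs, or from "leaking" partway into a block and backing out in a way that corrupts door states or strands token robots. Because the symmetric self-closing door has a rigid two-tunnel behavior, a half-traversed block could leave a door in state $2$ with no robot available to flip it back, which would let a later pass behave incorrectly. I would resolve this by a careful one-way layout — directing connections so that once the control robot commits to $B_r$ it has no legal move except to complete the input section, then the output section, then return to $c$ — and by arguing that any "dead-end" partial traversal simply wedges that robot forever without producing new reachable configurations, so it is harmless for reachability and can be excluded from any play witnessing the target for reconfiguration. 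Establishing this formally is essentially a simulation-invariant argument: I would state an invariant characterizing the "clean" configurations (control robot at $c$, all doors home, token counts $=$ current Petri-net configuration) and show (i) every Petri-net rule application is realized by a clean-to-clean play, and (ii) from any clean configuration, every clean configuration reachable by the gadgets corresponds to a sequence of rule applications. The counting of gadgets is clearly polynomial — $\sum_r (|u_r| + |v_r|)$ doors plus $O(1)$ auxiliary structure per rule — so once the invariant is in hand, both reductions and their polynomial-time bound follow immediately, with the spawner supplying output tokens in both cases and playing no other role.
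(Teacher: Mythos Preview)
Your construction is essentially the paper's: a control robot based in a control room, one block per rule built from $|u|+|v|$ symmetric self-closing doors, an input section where token robots flip doors so the control robot can pass, and an output section where the control robot flips doors so token robots can enter the output dish-locations. Your atomicity worry and its resolution --- a partial pass only strands robots and leaves doors/intermediates off-target, so it never helps for reachability and is excluded by the target for reconfiguration --- is exactly the paper's argument, which it compresses to the single sentence ``it is never beneficial to partially traverse a rule gadget.'' (A minor wording issue: in the 1-player model nothing is ever forced, so ``no legal move except to complete'' is not literally true; the correct statement, which you also give, is the never-beneficial one.)

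One routing detail is worth flagging. You send every consumed input token to a dead-end holding location and draw every output token fresh from the spawner. The paper instead threads the consumed input tokens through intermediate wires and reuses them as the output tokens, invoking the spawner only for the surplus in volume-increasing rules (and a dead-end only for the deficit in volume-decreasing rules). For robot reachability this difference is immaterial. For targeted reconfiguration it matters: in your design the input holding location accumulates one robot per token consumed over the whole run, and since the number and identity of rule firings is not determined by the Petri-net target, you cannot assign that location a target count. The paper's routing empties each intermediate wire after every completed firing, so their targets can simply be set to~$0$ --- and that is precisely what makes the ``partial traversal is never beneficial'' argument go through for reconfiguration. Threading your inputs through to your outputs (rather than sourcing all outputs from the spawner) fixes this and brings your construction in line with the paper's.
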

\begin{proof}
 For a rule $(a, b)$ we include $|a| + |b|$ copies of the gadgets. There is a gadget for each input to the rule; these gadgets can be traversed from the location representing an input dish to an intermediate location, opening another tunnel for the control robot to traverse. The control robot must traverse all the input gadgets the goes through the tunnels of the output gadgets. The control robot opens the doors of these gadgets allowing the robots moving from an intermediate wire to traverse to a location representing the output dishes.

If a rule would increase the volume, the surplus output gadgets will allow traversal from the spawn location instead of an input gadget. If a rule decreases the volume, then the surplus input gadgets send robots to a ``sink'' location instead of an output gadget. We do not require a true sink in this case because we can add an extra location which robots can be held instead of being deleted. If we do not connect this location to any other gadget, then the robots can never leave and can be thought of as having left the system. 

Production reduces to robot reachability since a robot can reach a location if and only if a token can reach the corresponding dish. If token is placed in a dish, it must have moved through a rule gadget. The robot can only move through a rule gadget if the number of robots in the dishes are at least the number of tokens of the left hand side of the rules to open the tunnels for the control robot to move through. 

Configuration reachability in Petri nets reduces to multi-robot targeted reconfiguration. The target and initial states of the gadgets are the same. The only difference between the initial configuration and the target is the number of robots at each location, equal to the counts in the instance of Configuration reachability for Petri nets. The number of robots at each location is equal to the number of tokens in each dish. The targets for each intermediate wire is $0$ and in the control room $1$. Thus, it is never beneficial to partially traverse a rule gadget.
\end{proof}

\subsection{Complexity of Reachability}
The reachability problem for a single robot is very similar to the well-studied problem in Petri nets called coverage. The input to the coverage problem is a Petri net and a vector of required token amounts in each dish, and the output is yes if and only if there exists a rule application sequence to reach a configuration with at least the required number of tokens in each dish.

\begin{definition}[Coverage Problem]
\textbf{Input:} A Petri net $\{D, R\}$, and vectors $d_0$ and $d_c$.

\textbf{Output:} Does there exist a reachable configuration $d \in REACH(\{D, R\}, d_0)$ such that $d[k] \geq d_c[k]$ for all $0 \leq k < |D|$. 
\end{definition}

\begin{theorem}
Robot reachability is EXPSPACE-complete with symmetric self-closing doors, a spawner, and optionally a destroyer.
\end{theorem}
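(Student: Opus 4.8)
The plan is to obtain both bounds from the Petri-net equivalence already established, together with the classical fact that the coverability problem for Petri nets is EXPSPACE-complete.

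\textbf{Membership.} Given a system of symmetric self-closing doors with a spawner and optionally a destroyer, apply Lemma~\ref{lem:gadgets2petri} to build, in polynomial time, a Petri net $\{D,R\}$ and an initial configuration $t$ that simulates it. Some robot reaches the target location in the gadget system if and only if some configuration reachable from $t$ places at least one token in the robot dish corresponding to that location — that is, an instance of the production problem, which is exactly the special case of the coverage problem in which $d_c$ is a unit vector. Rackoff's $2^{O(n\log n)}$-space algorithm for coverability then places robot reachability in EXPSPACE. (The same argument in fact gives EXPSPACE membership for robot reachability with any fixed set of deterministic gadgets, a spawner, and optionally a destroyer, not only for the symmetric self-closing door.)

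\textbf{Hardness.} Recall that coverability is EXPSPACE-hard (Lipton), and so is production: from a coverability instance with target vector $d_c$ one adds a fresh dish $p$ and a single rule $(d_c, e_p)$, so that $p$ can be marked if and only if $d_c$ is coverable. Lemma~\ref{lem:Petri2Gadgets} then reduces production in polynomial time to robot reachability with the symmetric self-closing door and a spawner, establishing EXPSPACE-hardness already without a destroyer; allowing an optional destroyer cannot lower the complexity (one can simply not use it). Combining the two bounds yields EXPSPACE-completeness.

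\textbf{Main obstacle.} The only nonroutine ingredients are the two classical Petri-net results — Rackoff's space upper bound and Lipton's lower bound — which we invoke rather than reprove, so the genuinely delicate work (the two simulation lemmas) is already in hand. The one point still worth a sentence of care is that in the reduction of Lemma~\ref{lem:Petri2Gadgets} no robot — in particular the control robot — can reach the designated win location except via the intended movement of a token robot into the dish corresponding to the target location, so that ``some robot reaches the target'' faithfully encodes ``a token reaches the target dish''; this is precisely the correspondence argued in the proof of that lemma.
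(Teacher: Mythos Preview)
Your proof is correct and follows essentially the same route as the paper: membership via Lemma~\ref{lem:gadgets2petri} plus Rackoff's coverability algorithm, and hardness via the coverage-to-production reduction (add a fresh dish and a single rule consuming $d_c$) followed by Lemma~\ref{lem:Petri2Gadgets}. Your added remarks that membership holds for any deterministic gadget set and that the destroyer is optional for hardness are accurate and match the paper's observations.
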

\begin{proof}
 We can solve robot reachability by converting the system of gadgets to a Petri net which simulates it as in Lemma~\ref{lem:gadgets2petri}. In this simulation, a token can be placed in a location dish if and only if a robot can reach that location represented by that dish. 
  Determining if a single token can be placed in a target dish, the production problem, is a special case of coverage problem where the target dish is labeled with $1$ and all others labeled with $0$.  
  We can use the exponential-space algorithm for Petri-net coverage shown in \cite{rackoff1978covering} to solve robot reachability. When simulating the sink we require rules that decrease the volume of a Petri net. This algorithm works for general Petri nets so it implies membership with a sink. 

For hardness, we first reduce Petri-net coverage to Petri-net production by adding a target dish $T$ starting with $0$ tokens and a new rule. This rule takes as input the number of tokens equal to the goal of the coverage problem and produces one token to the $t$ dish. 
This token can only produced if the reach a configuration that has at least the target number of each species.  We then use Lemma \ref{lem:Petri2Gadgets} to reduce production to robot reachability with the self-closing symmetric door and a spawner.  It is relevant to note the first reduction does not work when exactly the target numbers are required. 
The reduction works even when not allowing the sink as described in Lemma \ref{lem:Petri2Gadgets}. 
\end{proof}

\subsection{Complexity of Reconfiguration}
The reconfiguration problem has been studied in the single-robot case as the problem of moving the robot through the system of gadgets so that each gadget is in a desired final state. Targeted reconfiguration not only asked about the final states of the gadgets, but the location of the robot as well. Here, we study multi-robot targeted reconfiguration which requires both that all gadgets are in specified final states and that each location contains a target number of robots. 

\begin{definition}
For a gadget $G$, the \defn{multi-robot targeted reconfiguration problem for $G$} is the following decision problem. Given a system of gadgets consisting of copies of $G$ and the starting location(s)
a target configuration of gadgets and robots, 
is there a sequence of moves the robots can take to reach the target configuration? 
\end{definition}

The complexity of multi-robot targeted reconfiguration depends on whether we allow a destroyer. If we do not allow for a destroyer, the complexity is bounded by polynomial space since we can never have more robots than the total target size. If we allow for the ability to destroy robots, then the reconfiguration problem is the same as the configuration reachability problem in Petri nets from our relations between the models above. This is a fundamental problem about Petri nets and was only recently shown to be ACKERMANN-complete \cite{leroux2022reachability, czerwinski2022reachability}.

\begin{theorem}
Multi-robot targeted reconfiguration is ACKERMANN-complete with symmetric self-closing doors, a spawner, and a destroyer.
\end{theorem}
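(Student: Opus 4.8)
The plan is to establish ACKERMANN-completeness by proving containment and hardness separately, each via the equivalence with Petri-net configuration reachability already developed in Lemmas~\ref{lem:gadgets2petri} and~\ref{lem:Petri2Gadgets}, together with the recent result of \cite{leroux2022reachability, czerwinski2022reachability} that configuration reachability for Petri nets is ACKERMANN-complete.

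For membership in ACKERMANN, I would apply Lemma~\ref{lem:gadgets2petri}: a system of symmetric self-closing doors together with a spawner and a destroyer is simulated by a Petri net $\{D,R\}$, where robot counts at locations become token counts in robot dishes and gadget states become single tokens in state dishes. A target configuration of gadgets and robots corresponds to a target Petri-net configuration $t_f$ (with the required single token in each gadget's target state dish and the required token count in each location dish), and by the simulation statement, $t_f$ is reachable in the gadget system if and only if the corresponding Petri-net configuration is in $REACH_P(\{D,R\}, t_0)$. Thus multi-robot targeted reconfiguration reduces in polynomial time to Petri-net configuration reachability, which is in ACKERMANN; hence so is our problem. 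I should note that the destroyer is exactly what allows the volume-decreasing rules used in this simulation, so this direction genuinely needs the destroyer.

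For ACKERMANN-hardness, I would invoke Lemma~\ref{lem:Petri2Gadgets}, which states that configuration reachability in Petri nets is polynomial-time reducible to multi-robot targeted reconfiguration with the symmetric self-closing door and a spawner. Since configuration reachability is ACKERMANN-hard, so is multi-robot targeted reconfiguration with a symmetric self-closing door and a spawner, a fortiori with a destroyer also available. One subtlety to address: the construction in Lemma~\ref{lem:Petri2Gadgets} handles volume-decreasing rules by routing surplus robots to a dead-end ``holding'' location rather than a true destroyer, so the hardness actually holds even without a destroyer; I would remark on this but note that the theorem as stated (with the destroyer present) follows immediately, since extra available gadgets cannot decrease the complexity.

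The main obstacle is not conceptual but bookkeeping: I must verify that the two simulations of Lemmas~\ref{lem:gadgets2petri} and~\ref{lem:Petri2Gadgets} preserve \emph{exact} configurations (not just coverage), so that ``the target is reachable'' transfers faithfully in both directions. Concretely, I need to check that in the gadget-to-Petri-net direction every intermediate Petri-net configuration corresponds to a genuine gadget configuration (no ``phantom'' reachable configurations that have no gadget meaning), and in the Petri-net-to-gadget direction that a target demanding $0$ robots on every intermediate wire and $1$ on the control room forces the control robot never to be caught mid-traversal of a rule gadget, exactly as claimed in the proof of Lemma~\ref{lem:Petri2Gadgets}. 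Once those correspondences are nailed down, the ACKERMANN-completeness follows by transitivity of polynomial-time reductions.
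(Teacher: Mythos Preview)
Your core argument is correct and matches the paper's proof exactly: membership via Lemma~\ref{lem:gadgets2petri} together with the ACKERMANN upper bound for Petri-net configuration reachability, and hardness via Lemma~\ref{lem:Petri2Gadgets} together with the ACKERMANN lower bound of \cite{czerwinski2022reachability}.

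One side remark to retract, however: your claim that ``the hardness actually holds even without a destroyer'' is not correct, despite what the bare statement of Lemma~\ref{lem:Petri2Gadgets} might suggest. The holding-location trick works for robot reachability, but for \emph{exact} targeted reconfiguration you would have to specify a target robot count at the holding location, and that count is not determined by the initial and final Petri-net configurations alone---different witnessing rule sequences can fire volume-decreasing rules different numbers of times. The paper itself notes immediately after this theorem that a sink is essential for the hardness reduction, and the very next theorem shows that without a destroyer the problem drops to PSPACE-complete. So drop the ``a fortiori'' comment and simply run the Lemma~\ref{lem:Petri2Gadgets} construction with a genuine destroyer in place of the holding location. (Your remark that the membership direction ``genuinely needs the destroyer'' is also slightly off: the gadgets-to-Petri-net simulation works with or without a destroyer; the destroyer merely introduces volume-decreasing rules, which general Petri nets handle anyway.)
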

\begin{proof}
For membership we can solve multi-robot target reconfiguration by converting the gadgets to the Petri net using Lemma~\ref{lem:gadgets2petri}. The target configuration is a state token for each gadget in the dish of its target state, and a number of tokens in each location dish as the number of robots in the target configuration. We can then call the ACKERMANN algorithm for configuration reachability in Petri nets shown in \cite{leroux2019reachability}. 

For hardness we can reduce from configuration reachability.  It was shown in \cite{czerwinski2022reachability} that configuration reachability is ACKERMANN-hard. 
\end{proof}

The reduction presented in \cite{czerwinski2022reachability} vitally uses the ability of Petri nets to delete tokens, so we must use a sink in our simulation. Without a sink, we have PSPACE-completeness for multi-robot targeted reconfiguration. 

\begin{theorem}
Multi-robot targeted reconfiguration for symmetric self-closing doors and a spawner is PSPACE-complete.
\end{theorem}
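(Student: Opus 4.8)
The plan is to prove membership in PSPACE by a bounded search, and PSPACE-hardness by importing the hardness of reachability for $1$-safe Petri nets through Lemma~\ref{lem:Petri2Gadgets}.

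For membership, the key observation is that without a destroyer the total number of robots in the system is monotonically non-decreasing: the spawner only adds robots, traversing a symmetric self-closing door only relocates a robot, and nothing removes robots. Hence if the target configuration has total volume $V$ (the sum of the prescribed robot counts, which is part of the input), then every configuration appearing on any move sequence that reaches the target has volume at most $V$ (and if the start configuration already has more than $V$ robots the answer is trivially ``no''). So it suffices to search the reachability graph restricted to configurations of volume $\le V$. Such a configuration is described by one bit of state per door plus, for each of the $m$ locations, a robot count in $\{0,\dots,V\}$; this fits in $O(g + m\lceil\log(V+1)\rceil)$ bits, where $g$ is the number of gadgets, which is polynomial in the input size even when the target counts are written in binary. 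Since the successor relation is clearly polynomial-time checkable, the existence of a path from the start configuration to the target in this exponentially large but polynomially describable graph is decidable in PSPACE by the standard argument (equivalently, by $\mathrm{NPSPACE}=\mathrm{PSPACE}$). One could alternatively pass to the Petri net of Lemma~\ref{lem:gadgets2petri} and invoke the PSPACE bound for reachability in volume-bounded Petri nets.

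For hardness, I would reduce from the reachability problem for $1$-safe Petri nets --- Petri nets all of whose reachable configurations place at most one token in each dish --- which is PSPACE-complete (see the survey \cite{esparza2005decidability}). Given such a net together with an initial configuration $d_0$ and a target configuration $d_t$, apply the reduction of Lemma~\ref{lem:Petri2Gadgets}, which in polynomial time produces a system of symmetric self-closing doors with a spawner and a target configuration such that the target is reachable if and only if $d_t$ is reachable from $d_0$. That reduction is stated for arbitrary Petri nets, and its handling of volume-decreasing rules uses only an unreachable ``holding'' location rather than an actual destroyer gadget, so it stays inside our gadget set. Composing these two steps yields a polynomial-time reduction from a PSPACE-hard problem, completing the proof.

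The main obstacle I anticipate is not in the reduction itself but in making the membership bound airtight: one must be sure that monotonicity of the robot count genuinely forbids ever exceeding $V$ on a successful play (it does, since the count can never decrease back to $V$), and that the encoding stays polynomial when robot counts are given in binary. A secondary point to get right is the precise statement and citation of the $1$-safe Petri net result and the remark that Lemma~\ref{lem:Petri2Gadgets}'s treatment of decreasing rules does not covertly require a destroyer. If a self-contained ``gadgets-only'' hardness proof were wanted instead, one could attempt to reduce directly from robot reachability with symmetric self-closing doors \cite{doors} by appending, after the win location, a reset phase in which spawned robots restore every door to its target state in a carefully chosen order; but controlling the side effects of that reset phase is exactly the bookkeeping the Petri net route lets us avoid, so I would present the Petri net reduction as the proof.
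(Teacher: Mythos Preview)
Your membership argument is essentially the paper's: monotonicity of the robot count bounds the search to configurations of volume at most $V$, and then $\mathrm{NPSPACE}=\mathrm{PSPACE}$ finishes it. That part is fine.

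For hardness you diverge from the paper, and there is a real gap in your route. The paper does not go through Petri nets at all: it simply observes that the single-robot problem is already PSPACE-hard for symmetric self-closing doors, and that any single-robot instance is a multi-robot instance in which the spawner is absent (or wired to an unreachable location). That is a one-line inheritance.

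Your reduction from $1$-safe Petri net reachability via Lemma~\ref{lem:Petri2Gadgets} does not go through as stated. The issue is exactly the point you flag and then set aside: when a volume-decreasing rule fires, the construction of Lemma~\ref{lem:Petri2Gadgets} sends the surplus robot to a holding location rather than destroying it. For targeted reconfiguration you must specify a target count at that holding location, but two different firing sequences that both reach $d_t$ can fire different multisets of decreasing (and increasing) rules and hence leave different numbers of robots in the holding location. So there is no single gadget target that is reachable if and only if $d_t$ is reachable. This is not a problem for production/robot reachability, but it breaks the configuration-reachability half of the lemma once the destroyer is removed; you cannot just cite the lemma as a black box here.

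The fix is easy but needs to be made explicit: preprocess the $1$-safe net into a conservative one by adding a complement place $\bar{d}$ for each place $d$ (with $d+\bar{d}=1$) and augmenting each rule so that $|u|=|v|$. For $1$-safe nets this preserves reachability and remains PSPACE-hard, and now every rule is volume-preserving, so the holding location is never used and its target is $0$. With that step added, your argument works; without it, the hardness direction is incomplete. Either way, the paper's direct inheritance from the single-robot case is the shorter path.
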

\begin{proof}
Consider the input to the reconfiguration problem: two configurations of a system of gadgets. Namely, the start and end state of all the gadgets, and a start and end integer for each location. Since we can never destroy a robot once it is spawned, it always exists, so the player cannot spawn more robots than the total number of robots in the target configuration. We can then solve this problem in NPSPACE by nondeterministically selecting a robot to move, either from the source or another location. If we ever increase the total number of robots above the target we may reject. If we ever reach the configuration with the correct gadget states and robots at each location accept. Since PSPACE = NPSPACE we get membership. 

We inherit hardness from the 1-player single-robot case by not including the source or connecting it to an unreachable location. 
\end{proof}

%%%%%%%%%%%%%%%%%%%%%%%%%%%%%%%%%%%%%%%%%%%%%%%%%%%%%%%%%%%%%%%%%%%%%%%%

\section{Impartial Unbounded 2-Player Motion Planning}
\label{sec:impartial}

In this section, we describe the 2-player impartial motion planning game and show that it is EXPTIME-complete for any
reversible deterministic gadget.

\subsection{Model}

In the \defn{2-player impartial motion planning game}, two players control the same robot in a system of gadgets. Player 1 moves first, then
Player 2 moves, then play repeats. On a given player's turn, they move the robot arbitrarily along the connection graph and through exactly one transition of a gadget.
There is also a \defn{ko rule}: The robot cannot traverse the same gadget on a player's turn as it traversed on their opponent's previous turn.
If a player cannot make the robot traverse a gadget without breaking the ko rule, that player loses and the other player wins.

\begin{lemma}\label{lem:2-player-in}
	Deciding whether Player 1 has a deterministic winning strategy in the 2-player impartial motion planning game is in EXPTIME
	for any set of gadgets.
\end{lemma}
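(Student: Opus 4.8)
The plan is to model the game as a finite two-player perfect-information game on an explicitly constructed \emph{game graph} and then decide the winner by the classical retrograde (attractor) analysis. The crucial observation is that a complete description of a game position---everything that determines the legal moves and the eventual outcome---fits in polynomial space. Such a position consists of: (i) the current state of every gadget; (ii) the location of the robot, which we may record as its connected component in the connection graph since locations in the same component are interchangeable; (iii) the identity of the gadget traversed on the opponent's most recent move (or a flag ``none'' at the start), which is all that is needed to enforce the ko rule; and (iv) which player is to move. If the system consists of $n$ gadget copies with at most $k$ states each and $L$ locations, the number of such positions is at most $2\,(n{+}1)\,L\,k^{n}$, which is singly exponential in the size of the input (this holds whether the gadget set is fixed or given explicitly, since then $\sum_i k_i$ is at most the input length).

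First I would build the game graph $\mathcal{G}$ on these positions. From a position $v$, the successors are obtained by letting the robot walk for free within its current connection component to some gadget location $a$, choosing any available transition $(S,a)\to(S',b)$ of a gadget in state $S$ that is not the ko-forbidden gadget, and applying it: update that gadget's state to $S'$, move the robot to the component of $b$, record this gadget as the new ko-forbidden one, and flip the player to move. Since the total number of transitions across all gadgets is polynomial and the component relation is precomputable in polynomial time (a union--find on the connection graph), each position has polynomially many outgoing edges, so $\mathcal{G}$ has exponential size and is constructible in exponential time. A position is \emph{terminal} precisely when the mover has no successor, and by the rules the mover loses there.

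Next I would solve $\mathcal{G}$ with the standard algorithm for ``normal-play'' games (equivalently, a reachability game): iteratively label a position a \emph{loss for the mover} if it is terminal or all its successors are already labeled wins for the mover, and label it a \emph{win for the mover} if some successor is labeled a loss for the mover. This least-fixed-point computation stabilizes after at most $|\mathcal{G}|$ rounds, each costing time polynomial in $|\mathcal{G}|$, hence exponential time overall. Player~1 has a deterministic winning strategy if and only if the initial position (Player~1 to move, initial gadget states, the robot's starting component, ko-flag ``none'') is labeled a win for the mover; a position that remains unlabeled corresponds to one in which the mover can avoid losing only by forcing infinite play, which is not a win.

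I expect the main obstacle to be the careful bookkeeping of the game state rather than any algorithmic depth: one must check that the ko rule really requires remembering only a single previously-traversed gadget (so the position stays polynomial-size), that recording connection-graph components loses no information, that the per-turn combination of free movement plus one transition is faithfully captured by the successor relation, and that ``Player~1 has a deterministic winning strategy'' is exactly the ``win for the mover'' predicate at the root with infinite plays counted as non-wins. Once the model is pinned down, EXPTIME membership is immediate from the exponential bound on $|\mathcal{G}|$ and the polynomial-time solvability of finite normal-play games.
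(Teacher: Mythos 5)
Your proof is correct, and it establishes the same key fact the paper relies on---that a complete game position (gadget states, robot's connection component, the single ko-forbidden gadget, and the player to move) fits in polynomial space, so there are only exponentially many positions---but it reaches EXPTIME by a different standard route. The paper simulates the game with an alternating polynomial-space Turing machine (existential branching for Player~1, universal for Player~2, rejecting after exponentially many turns) and invokes $\mathrm{APSPACE}=\mathrm{EXPTIME}$; you instead explicitly build the exponential-size game graph and solve it by retrograde (attractor/least-fixed-point) analysis in time polynomial in the graph's size. The two are essentially interchangeable here: the alternating-machine argument is shorter and offloads the work to a known complexity-class equality, while your construction is more self-contained, makes the handling of infinite play explicit (unlabeled positions are non-wins for the mover, which correctly matches ``Player~1 has a deterministic winning strategy''), and spells out the bookkeeping---in particular the observation that the ko rule only requires remembering one gadget---that the paper leaves implicit. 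No gaps; either write-up suffices.
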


\begin{proof}
	An alternating Turing machine can solve the problem by using existential states to guess Player 1's moves and universal states to guess Player 2's moves, accepting when Player 1 wins and rejecting when Player 2 wins. This takes only polynomial space because the configuration of the game can be described in polynomial space. The machine can reject after a number of turns at least the number of configurations, which is at most exponential and thus can be counted to in polynomial space. Hence the problem is in APSPACE${}={}$EXPTIME.
\end{proof}

\subsection{Hardness}

We introduce the \defn{locking 2-toggle}, introduced in \cite{gadgets2} and shown in Figure~\ref{fig:locking-2-toggle}. States 1 and 3 are
\defn{leaf states} and state 2 is the \defn{nonleaf state}. If a robot crosses a tunnel in state 2, the tunnel flips direction
and the other tunnel locks. Crossing a tunnel again will reverse this effect.

\begin{figure}
	\centering
	\includegraphics[width=.4\linewidth]{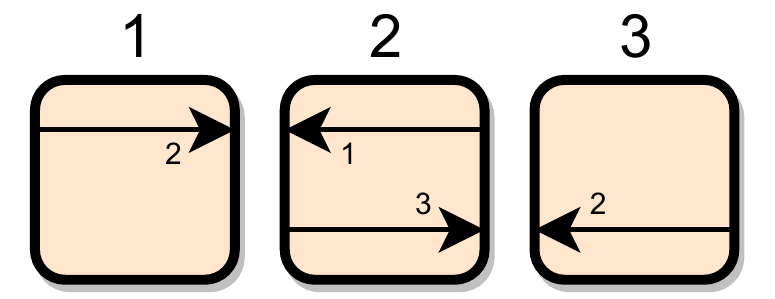}
	\caption{The locking 2-toggle.}
	\label{fig:locking-2-toggle}
\end{figure}

\begin{theorem}
	Deciding whether Player 1 has a deterministic winning strategy in the 2-player impartial motion planning game is EXPTIME-hard
	for the locking 2-toggle.
\end{theorem}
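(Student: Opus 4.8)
The plan is to prove EXPTIME-hardness by reduction from the 2-player (one-robot-per-player) motion-planning game with the locking 2-toggle, which is EXPTIME-complete by \cite{gadgets2}. So fix such an instance: robots~$1$ and~$2$ with start locations $s_1,s_2$ and win locations $t_1,t_2$, together with a system $\mathcal{S}$ of locking 2-toggles and a connection graph. From this I would build a single-robot system of locking 2-toggles for the impartial game in which the one shared robot plays the role of robot~$1$ on Player~1's turns and of robot~$2$ on Player~2's turns, so that Player~1 wins the impartial game iff robot~$1$'s owner wins the base game.

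The core difficulty is that the shared robot cannot physically sit where the \emph{inactive} virtual robot sits, yet each turn consists of only a single transition, leaving no room for in-turn bookkeeping. I would resolve this in two steps. First, I simulate each single turn of the base game by a short \emph{block} of impartial-game turns, of odd length, so that blocks alternate ownership between the two players exactly in sync with the base game's alternation between the two robots; within a block the owning player makes exactly one meaningful choice (a transition inside~$\mathcal{S}$ corresponding to the virtual robot's move), while every other move in the block is \emph{forced}, i.e., the shared robot has a unique legal transition, implemented by plain ``corridor'' chains of locking 2-toggles with no real branching. Second, the current location of each virtual robot is stored not in the robot's position but in the states of a private sub-network of locking 2-toggles acting as a routing tree (equivalently a one-hot register). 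A block for player~$i$ begins by forcing the shared robot down robot~$i$'s routing tree to its recorded location; then the owning player makes the one real move into/through~$\mathcal{S}$; then the robot is forced to re-route the tree to record the new location; then control passes to the other player. Reaching $t_1$ as robot~$1$ opens a terminal corridor that lets Player~1 force a win (symmetrically for $t_2$ and Player~2); the base game's ``stuck'' outcomes are matched to the impartial game's stuck-loses convention by routing a stuck virtual robot into the appropriate losing/winning corridor.

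The interaction with the ko rule is where the construction must be handled carefully, and I expect this to be the main obstacle. The ko rule forbids traversing the same gadget on two consecutive turns, so I must ensure: (i) every forced move really is forced and every intended real move really is available; (ii) the ko rule never blocks a forced or real move --- achieved by separating the two real moves of adjacent blocks with buffer (forced) turns, by having consecutive forced moves use distinct gadgets along the corridors and routing trees, and by keeping Player~1's and Player~2's routing trees disjoint so a block always starts on a fresh gadget; and (iii) neither player can escape the intended simulation by making an unintended transition, which I would argue by checking that in every reachable configuration the only transitions available are the intended forced one (during buffer turns) or exactly the transitions of~$\mathcal{S}$ legal for the active virtual robot (during real-move turns), any other attempt being either impossible or immediately losing for the deviator. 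Verifying these invariants and that the whole construction is polynomial-time in the base instance is the bulk of the work; together with Lemma~\ref{lem:2-player-in} this yields EXPTIME-completeness for the locking 2-toggle.
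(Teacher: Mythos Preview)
Your approach differs from the paper's and, as stated, has a genuine gap. The paper reduces directly from the formula game $G_4$ of \cite{gx}: the only state carried between simulated turns is a Boolean assignment, stored in gadget states, and the shared robot always returns to a fixed 1-toggle ``alternator'' between turns. The central mechanism is that every branch is built from \emph{doubled} locking 2-toggles; together with the ko rule this forces the non-active player to second whatever choice the active player just made, so the active player effectively steers the robot through an even-length path and then hands over control by crossing the alternator. No position ever needs to be recorded.

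Your reduction instead tries to simulate two independent robots moving through an \emph{arbitrary} locking-2-toggle system $\mathcal{S}$, and therefore must store and update each virtual robot's \emph{location} in gadget states. This is where the plan is incomplete. Consider the turn immediately after the ``real'' move inside $\mathcal{S}$, say from $\ell$ to $\ell'$. You need the next move to be a forced exit from $\ell'$ into the routing structure, yet every $\mathcal{S}$-transition available at $\ell'$ is also legal (only the one gadget just used is ko'd), so the next mover---your opponent---may simply take another step inside $\mathcal{S}$ and corrupt the simulation. Even if you could force the exit, you then need the robot to rewrite robot~$i$'s routing tree so that it records $\ell'$ rather than $\ell$; but the tree's locking 2-toggles were set by the descent toward $\ell$, and from a different leaf $\ell'$ the upward tunnels are closed, so there is no evident way to ascend, let alone in a number of moves independent of $\ell$ and $\ell'$ (which your fixed-odd-length block argument requires). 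These are not peripheral details; they are exactly the mechanisms that the paper's $G_4$-based construction sidesteps by never needing to remember a location at all. Your items (i)--(iii) name the obstacles but do not resolve them.
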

\begin{proof}
	We reduce from $G_4$ as defined in \cite{gx}. $G_4$ is a 2-player game involving Boolean variables where the players flip a variable
	on their turn and try to be the one to satisfy a common DNF Boolean formula with 13 variables per clause (a 13-DNF).
	Players have their own variables and can't flip their opponent's variables, and a player may flip 1 variable on their turn
	or pass their turn.	There is no ko rule.
	
	%\medskip
	We start the robot next to a 1-toggle (a single tunnel of a locking 2-toggle) as shown in
	Figure~\ref{fig:2-player-start}. This 1-toggle is called the \defn{alternator}. On each side of the alternator is a variable
	system for each player, which consists of variable branching and variable setting loops. The \defn{variable branching}, as shown in Figure
	\ref{fig:2-player-var-branch}, has 2 locking 2-toggles before each branch. These start in the nonleaf state.
	At the end of each path is a \defn{variable flipping loop},
	which is shown in \ref{fig:2-player-var-set}.	
	The variable flipping loop for variable $v$ contains	2 locking 2-toggles per instance of $v$ or $\neg v$ in the 13-DNF formula of the
	$G_4$ instance, as well as an path to the 13-DNF checker with 2 1-toggles on it.
	The locking 2-toggles representing $v$ start in the nonleaf
	state iff $v$ starts True in $G_4$, and the locking 2-toggles representing $\neg x$ start in the leaf state iff $x$ starts
	True in $G_4$. One path of the variable branch, on the other hand, leads to a
	\defn{pass loop}, which is a variable flipping loop with 2 1-toggles in the loop instead of the locking 2-toggles.
	The 13-DNF checker contains a path
	for each clause in the 13-DNF, and each path contains a locking 2-toggle representing $v$, the same as one of the locking
	2-toggles representing $v$ in the variable flipping loop of $v$, followed by a 1-toggle, for each variable $v$ in the corresponding
	clause. The paths all lead to a final 1-toggle called the \defn{finish line}. This reduction can be done in polynomial time,
	as each variable and clause in $G_4$ is converted to a polynomial number of constant-size gadgets.
	
	\begin{figure}
		\centering
		\includegraphics[width=.7\linewidth]{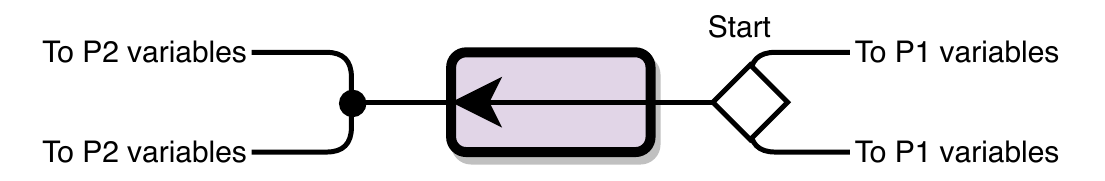}
		\caption{The robot's starting position, and the 1-toggle that's called the \emph{alternator}.}
		\label{fig:2-player-start}
	\end{figure}
	
	\begin{figure}
		\centering
		\includegraphics[width=\linewidth]{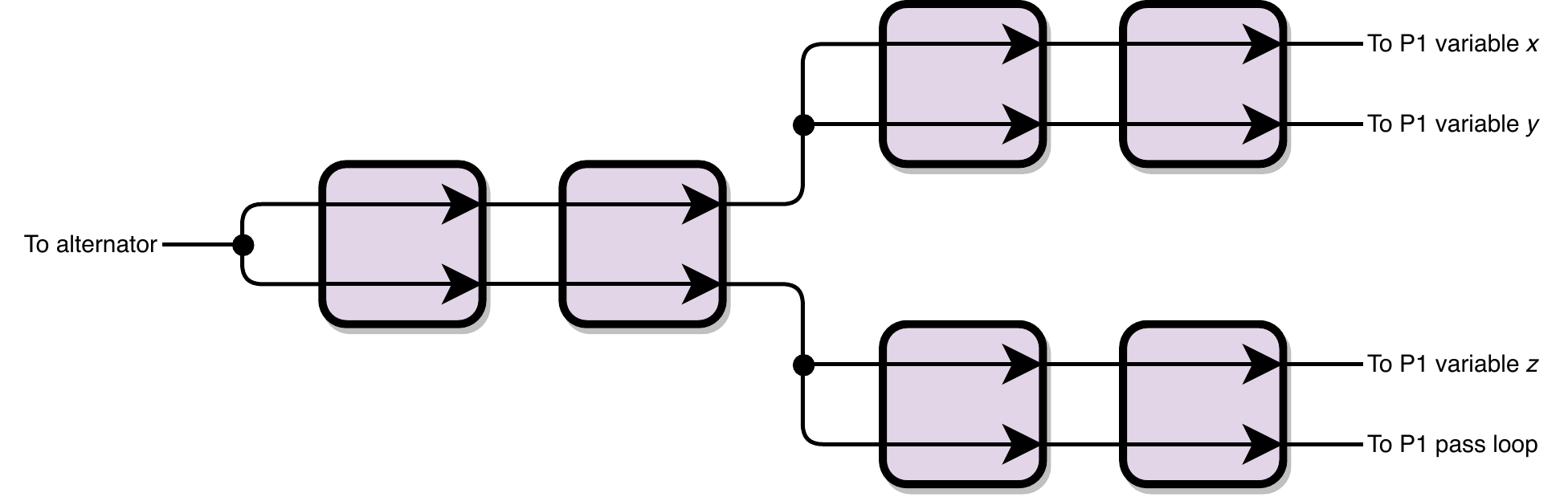}
		\caption{The variable branching for Player 1. Player 2's variable branching is on the other side of the alternator.
		In this example, player 1 has 3 variables: $x$, $y$, and $z$.}
		\label{fig:2-player-var-branch}
	\end{figure}
	
	\begin{figure}
		\centering
		\includegraphics[width=\linewidth]{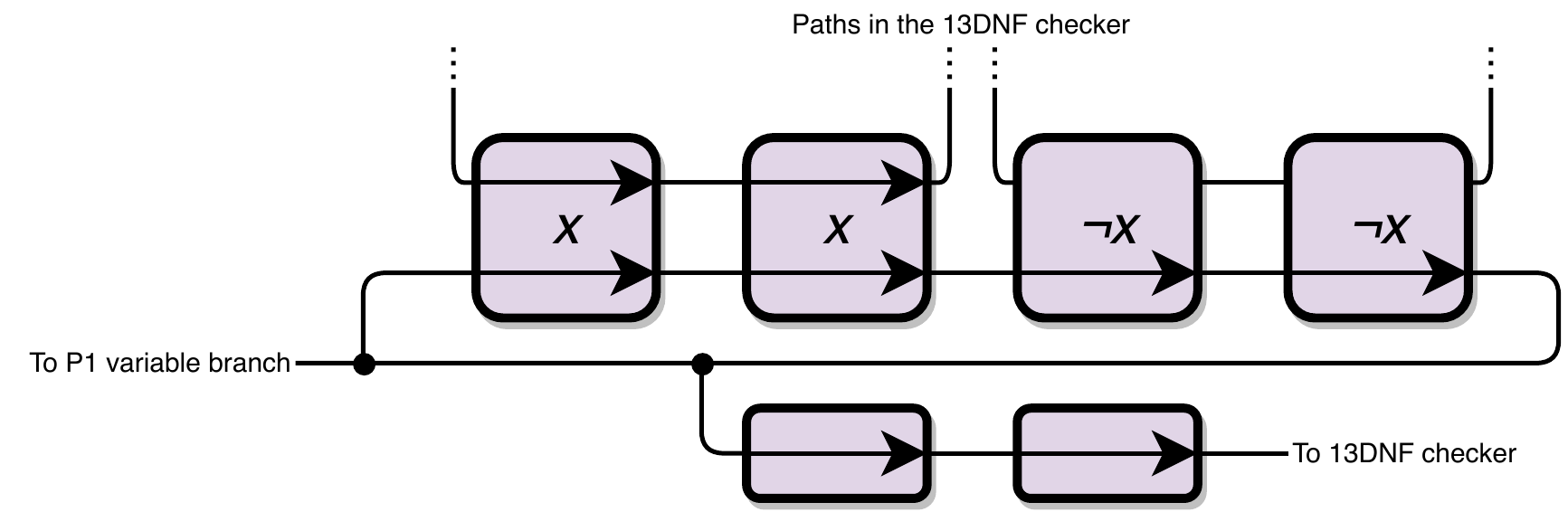}
		\caption{The variable flipping loop for variable $x$. This example represents the case where the 13-DNF has 1 instance of $x$
		and 1 instance of $\neg x$. Currently, $x$ is True.}
		\label{fig:2-player-var-set}
	\end{figure}
	
	\begin{figure}
		\centering
		\includegraphics[width=0.5\linewidth]{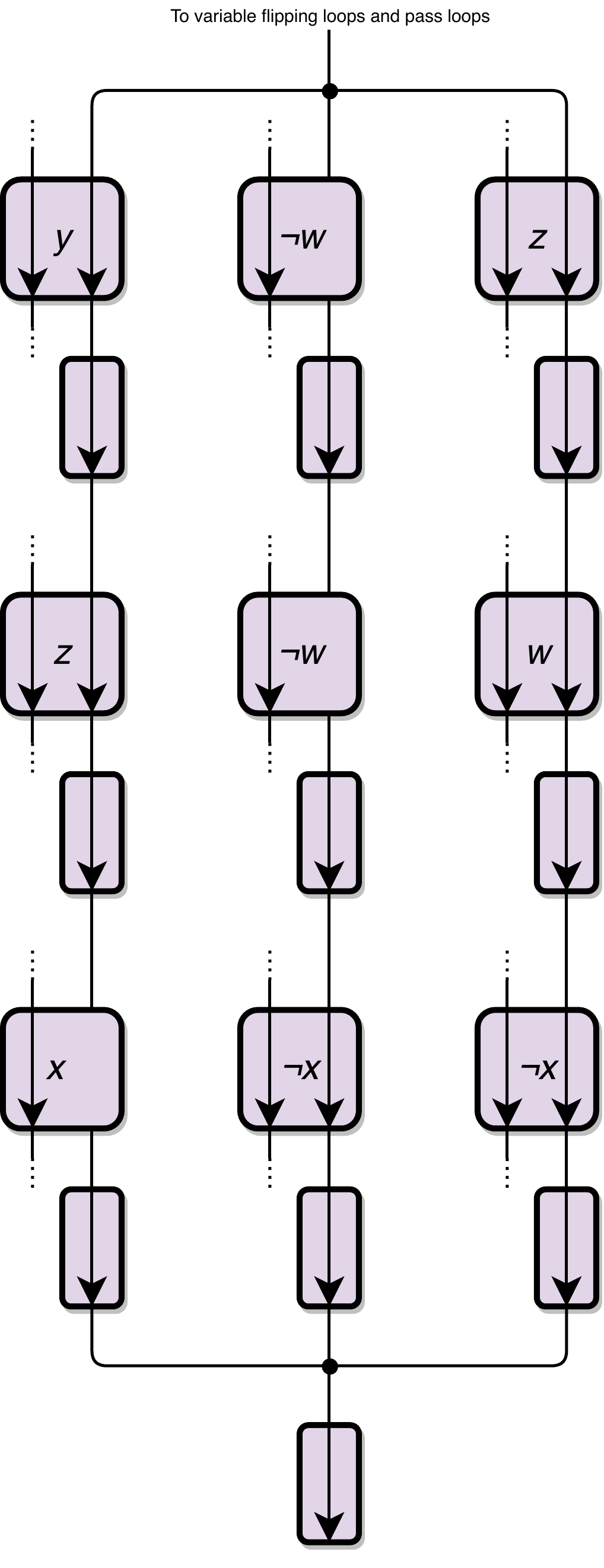}
		\caption{A 13-DNF checker, except that it represents a 3-DNF. This example represents
		$(y\vee z\vee x)\wedge(\neg w\vee\neg w\vee \neg x)\wedge(z\vee w\vee\neg x)$. The dotted paths are part of variable setting loops.}
		\label{fig:2-player-DNF-checker}
	\end{figure}
	
	During intended play:
	\begin{itemize}
		\item Player 1 moves the robot through variable branching to select a variable to set.
			Because the locking 2-toggles are doubled, and because
			of the ko rule, Player 2 has no choice but to second Player 1's choices. Player 1 could also move the robot to the pass loop.
		\item Player 1 moves the robot around a variable selection loop,
			a variable by flipping whether each locking 2-toggle is locked or not. If they're in
			the pass loop, they just go around the loop. Again, Player 2 has no choice since the number of gadgets in the path is even.
		\item Player 1 either moves the robot to the 13-DNF checker or back through the variable branching to the alternator.
		\item If Player 1 moves it back, they make it cross the alternator, and
			Player 2 goes through the same steps, but on the other side of the alternator.
		\item If a player moves the robot to the 13-DNF checker, they pick a path. If that path's corresponding clause in the 13-DNF is currently
			satisfied, they cross the finish line and win, since their opponent then has no legal moves. Otherwise, they get
			blocked by the first variable set to False, making their opponent win.
	\end{itemize}
	So Player 1 has the initiative and takes a $G_4$ turn on one side of the alternator, and Player 2 has the initiative
	and takes a $G_4$ turn on the other side. It is correct for a player to move the robot to the 13-DNF checker iff the 13-DNF
	is currently satisfied.
	
	%\medskip
	We will now look at ways that the players can try to break the simulation of $G_4$:
	\begin{itemize}
		\item Player 1 can make the robot cross the alternator as their first move. However, this lets Player 2 flip a variable or pass first. If
			Player 1 can win this way, they can also win by passing (moving the robot around the
			pass loop) first and then giving the initiative to Player 2.
			So not crossing the alternator first is always a correct move.
		\item A player can move the robot to a variable flipping loop and cut to the 13-DNF checker. However, if the player can
			win this way, they can win by passing and moving the robot to the 13-DNF checker.
		\item A player can try to turn around and flip another variable on the way back to the alternator. However, the ko rule prevents this.
		\item A player can try to move the robot to some other variable flipping loop from the start of the 13-DNF checker. However,
			1-toggles will block the way.
	\end{itemize}
	Thus, the players are effectively forced to play $G_4$ in this game. Therefore, if Player 1 has a deterministic
	winning strategy in the $G_4$ instance, then they have one in this game, and if Player 1 has a deterministic winning strategy
	in this game, then they have one in the $G_4$ instance as well.
\end{proof}

\begin{theorem}\label{thm:2-player-hard}
	Deciding whether Player 1 has a deterministic winning strategy in the 2-player impartial motion planning game is EXPTIME-hard
	for \emph{any interacting $k$-tunnel reversible deterministic gadget}.
\end{theorem}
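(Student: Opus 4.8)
The plan is to reduce from the locking-2-toggle case established in the preceding theorem, using the gadget-simulation framework of \cite{gadgets2}. Recall that a fixed system of copies of a gadget $G$, together with a designated set of external ports, \emph{simulates} a gadget $G'$ if, viewing those ports as the locations of $G'$, the robot's behavior on them is exactly that of $G'$, and moreover once the robot enters the system it has no option but to carry out some transition of $G'$ to completion (a ``safe'' or forced simulation). The key external input is the structural fact behind the reversible-deterministic characterization in \cite{gadgets2}: every \emph{interacting} $k$-tunnel reversible deterministic gadget can simulate a locking 2-toggle, and hence also a plain 1-toggle (which a locking 2-toggle simulates using a single tunnel). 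Granting this, it suffices to show that replacing every locking 2-toggle and every 1-toggle in the construction of the preceding theorem by its $G$-simulation produces an equivalent instance of the 2-player impartial motion planning game with the ko rule; since the blow-up is polynomial, EXPTIME-hardness for $G$ then follows from EXPTIME-hardness for the locking 2-toggle.

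The substance is checking that the simulation survives the passage to the impartial game, where a single transition of a simulated gadget becomes a sequence of $G$-transitions spread over several turns. I would give each copy of a locking 2-toggle (resp.\ 1-toggle) in the reduction a disjoint $G$-cluster simulating it, with three features: (i) every traversal of a cluster uses an \emph{odd} number of $G$-transitions, arranged by padding the cluster's entry corridor with fresh, $G$-built one-way passages; (ii) consecutive $G$-transitions along any traversal lie in \emph{distinct} copies of $G$, so the $G$-level ko rule never blocks the unique intended continuation; and (iii) each cluster has a distinguished ``bookend'' copy of $G$ traversed exactly once, at the very start and again at the very end of any traversal, so that the $G$-level ko rule forbids immediately re-traversing a cluster just used, faithfully reproducing the ko constraint at the locking-2-toggle level on which the preceding proof relies. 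Feature (i) preserves the parity of ``logical'' moves: because the preceding reduction already uses its gadgets in doubled series and even-length loops (and the alternator as a single 1-toggle), and each gadget now expands to an odd-length cluster, whoever begins traversing a cluster is exactly the player whose logical turn it is, control returns to Player~1 at precisely the points the preceding reduction demands, and Player~2 is still forced to second Player~1's choices.

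The main obstacle is adversarial robustness: showing that no player profits by slipping into the interior of a half-executed simulated transition and doing something unintended. I would handle this by using the forced/one-way structure of the $G$-clusters from \cite{gadgets2}, adapted to the impartial two-sided setting, so that any deviation either strands the robot in a dead end (leaving the deviator with no legal move, hence a loss) or hands the opponent a free continuation from a strictly worse position --- exactly the way the ``ways to break the simulation'' are ruled out in the preceding proof. Combined with feature (ii), which keeps the intended continuation ko-legal throughout, and feature (iii), which transports the locking-2-toggle ko rule down to the $G$ level, this yields that Player~1 has a deterministic winning strategy in the simulated game if and only if Player~1 has one in the original locking-2-toggle game; by the preceding theorem the latter is EXPTIME-hard, completing the reduction (and, with Lemma~\ref{lem:2-player-in}, giving EXPTIME-completeness).
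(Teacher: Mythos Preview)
Your high-level strategy matches the paper's: reduce from the locking-2-toggle case by simulating the locking 2-toggle with copies of the given gadget $G$, under constraints that preserve move parity and the ko rule. You also correctly identify the relevant constraints (odd-length traversals, ko transported via end gadgets, deviators must lose).

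The gap is that you treat the \cite{gadgets2} simulation as a black box to be retrofitted with padding and bookends, but that simulation was built for 1-player reachability and carries no parity guarantees. Your feature~(i) pads entry corridors to make \emph{successful} traversals odd-length, but you do not address the companion constraint the paper isolates explicitly: when a tunnel of the simulated locking 2-toggle is \emph{closed}, the robot must get blocked after an \emph{even} number of $G$-traversals, so that the player who initiated the doomed crossing is the one left with no legal move. Uniform entry padding shifts all depths by the same constant and cannot simultaneously make open paths odd and blocked paths even unless the underlying simulation already has a fixed parity relationship between the two --- which \cite{gadgets2} does not promise, and which can differ across the several blocking points inside a single cluster. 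Your bookend mechanism has a similar issue: for the ko rule to transport correctly, the \emph{last} $G$-gadget of every traversal must coincide with the \emph{first} $G$-gadget of every possible re-entry, across all four ports of the locking 2-toggle; a single ``distinguished'' copy at each entry corridor does not obviously achieve this, and you do not argue that it can be arranged without breaking the parity bookkeeping.

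The paper closes exactly this gap by discarding the black-box simulation and building a fresh one directly from the two specific interacting tunnels that \cite[Section~2.1]{gadgets2} guarantees every interacting $k$-tunnel reversible deterministic gadget contains. It first constructs a directed-tunnel primitive (two $G$-traversals to cross, blocked after zero), then assembles an explicit locking-2-toggle simulation in which every open path takes exactly nine $G$-traversals, every closed path blocks after zero or two, and the end gadgets sit on both tunnels --- so all three constraints are verified by inspection of a concrete construction rather than asserted to be achievable.
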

\begin{proof}
  Figure~\ref{fig:arbitrary-ikrdg} shows two tunnels that any
  interacting $k$-tunnel reversible deterministic gadget must have,
	as proved in \cite[Section~2.1]{gadgets2}, which further shows that
  these tunnels can be used to simulate a locking 2-toggle.
  For 2-player impartial motion
	planning, however, we must be careful of the simulation. To preserve parity, each traversal in the locking 2-toggle must correspond
	to an odd number of traversals in the simulation. In addition, if a traversal is not allowed, it must be blocked after an
	even number of traversals so the player who started moving the robot along that path loses. And to simulate the gadget ko rule,
	the gadgets at the ends of the simulation must be in the way of both paths. If all the constraints are met, then if a player
	makes the robot start a traversal along the simulation, the players must follow through, and in the end, it will be said player's
	opponent's turn. The opponent would have to make the robot traverse a gadget not in the simulation. Players would be disincentivized
	to start a traversal along a closed path, because they will be the one stuck with no legal moves. So the simulation would
	act exactly like a locking 2-toggle in the above reduction, giving us a straightforward reduction 2-player impartial motion planning
	with locking 2-toggles to 2-player impartial motion planning with any interacting $k$-tunnel reversible deterministic gadget.
	\begin{figure}
		\centering
		\includegraphics[width=.3\linewidth]{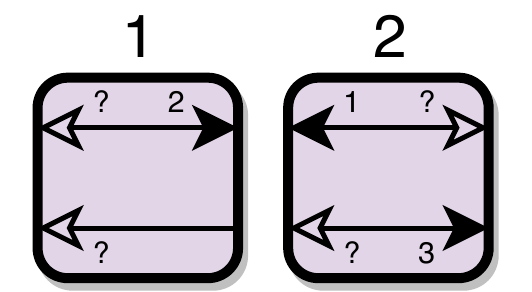}
		\caption{Two tunnels that an interacting $k$-tunnel reversible deterministic gadget must have. Solid arrows indicate open
		traversals, hollow arrows with ``?'' indicate optionally open traversals, and absent arrows indicate closed traversals.
		State 3 could be any state, including 1 and 2.}
		\label{fig:arbitrary-ikrdg}
	\end{figure}
	
	%\medskip
	First we simulate a 1-tunnel reversible deterministic gadget with a directed tunnel, as shown in Figure~\ref{fig:directed-sim}.
	The robot cannot cross from right to left. If it crosses from left to right, it may cross back (after traversing some
	other gadget, of course), and the path from left to right
	may optionally still be open, this time leading to whatever state. Note that it takes two traversals to cross the simulation,
	and that a closed path in state 1 of the gadget used in the simulation blocks the robot after 0 traversals.
	\begin{figure}
		\centering
		\includegraphics[width=.5\linewidth]{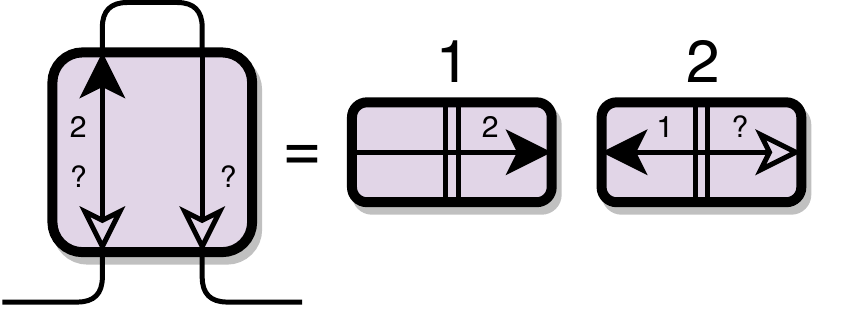}
		\caption{Simulation of a 1-tunnel reversible deterministic gadget with a directed tunnel. We draw double bars crossing the
		1-tunnel gadget as a reminder that it takes two traversals to cross.}
		\label{fig:directed-sim}
	\end{figure}
	
	%\medskip
	Now we simulate the locking 2-toggle, as shown in Figure~\ref{fig:locking-2-toggle-sim}. The simulation currently simulates
	the locking 2-toggle in the nonleaf state. The robot can traverse from top right to top left or from bottom left to bottom right.
	The robot will get blocked after two traversals in an attempt to traverse from top left to top right or from bottom right to bottom left.
	If the robot traverses from top right to top left, the robot will be able to traverse from top left to top right (after traversing
	a different gadget). But an attempt to traverse from bottom left to bottom right gets the robot blocked after 0 traversals, thanks
	to the tunnel interaction in the left gadget, and an attempt to traverse from bottom right to bottom left or from top right
	to top left gets blocked after two traversals. So this would simulate a leaf state of the locking 2-toggle. The center gadget never
	becomes relevant for blocking, so we can argue by symmetry that traversing from bottom left to bottom right results in the other
	leaf state. Note that each path takes nine traversals to cross, so we have successfully simulated the locking 2-toggle meeting the
	constraints. This completes the proof.
	\begin{figure}
		\centering
		\includegraphics[width=\linewidth]{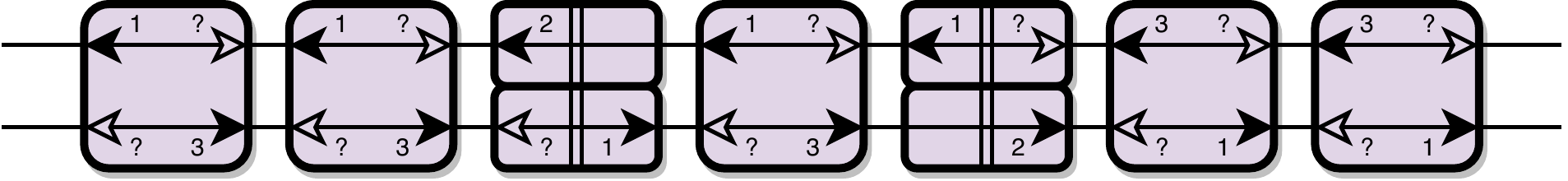}
		\caption{Simulation of the locking 2-toggle, under the constraints.}
		\label{fig:locking-2-toggle-sim}
	\end{figure}
\end{proof}

By Lemma~\ref{lem:2-player-in} and Theorem~\ref{thm:2-player-hard}, it is EXPTIME-complete to determine whether Player 1 has a deterministic
winning strategy in the 2-player impartial motion planning game with any interacting $k$-tunnel reversible deterministic gadget.

%%%%%%%%%%%%%%%%%%%%%%%%%%%%%%%%%%%%%%%%%%%%%%%%%%%%%%%%%%%%%%%%%%%%%%%%

%%%%%%%%%%%%%%%%%%%%%%%%%%%%%%%%%%%
\section{Open Problems}
For 0-player motion planning, we leave as an open problem whether the finite-time reachability problem is undecidable for a smaller set of gadgets. In particular,
we used gadgets that can separate one robot from the rest when they are all stuck at the same spot.
Is the problem undecidable for gadgets without this ability? What about classes of gadgets that have already been studied such as self-closing doors or reversible, deterministic gadgets?

In the 0-player model with spawners we investigated a synchronous model for the robots where they all took turns making their moves. One could imagine asking about various asynchronous models of robot motion through the gadgets.

For 1-player multi-agent motion planning, we investigated robot reachability and multi-agent targeted reconfiguration. The hardness for both these problems relies on simulating Petri nets with a symmetric self-closing door.
Do there exist reversible gadgets for which the problem is the same complexity?
How does this relate to reversible Petri nets?

We also did not investigate spawners in the 2-player setting. It seems likely that this problems is Undecideable for many gadget; however, the 0-player and 1-player constructions do not obviously adapt to give this result. 

Finally, in the 2-player impartial case, does the complexity change for other gadgets?
Are there any gadgets for which finding a winning strategy is provably easier? What about cases where the impartial game is harder than the regular 2-player game?

%\nocite{*}
\bibliographystyle{alpha}
\bibliography{refs}
\end{document}